\newtheorem{fact}{Fact}    
\newtheorem{definition}{Definition} 
\newtheorem{theorem}{Theorem}    
\newtheorem{corollary}{Corollary}    
\newcommand{\qed}{\hfill{$\rule{6pt}{6pt}$}} 
\newenvironment{proof}{\noindent{\bf Proof}:}{\qed}
\newenvironment{proofof}[1]{\noindent{\bf Proof of #1:}}{\qed}
\newcommand{\defeq}{\stackrel{\mathsf{def}}{=}}
\newcommand{\ket}[1]{| #1 \rangle}
\newcommand{\bra}[1]{\langle #1 |}
\newcommand{\ketbra}[1]{| #1 \rangle \langle #1 |}
\newcommand{\tn}[1]{\left\| #1 \right\|_{{\mathrm{tr}}}}
\newcommand{\cC}{{\mathcal C}}
\newcommand{\R}{{\mathbb R}}
\newcommand{\C}{{\mathbb C}}
\newcommand{\EQ}{{\mathrm{EQ}}}
\newcommand{\E}{{\mathbb{E}}}
\newcommand{\Tr}{{\mathsf{Tr}}}
\newcommand{\cH}{{\cal H}}
\newcommand{\cK}{{\cal K}}
\newcommand{\cP}{{\cal P}}
\newcommand{\cX}{{\cal X}}
\newcommand{\cY}{{\cal Y}}
\newcommand{\cZ}{{\cal Z}}
\newcommand{\sQ}{{\mathsf{Q}}}
\newcommand{\sR}{{\mathsf{R}}}
\newcommand{\sD}{{\mathsf{D}}}
\newcommand{\cS}{{\mathcal{S}}}
\newcommand{\smp}{{\sf {SMP}}}
\newcommand{\pub}{{\sf {pub}}}
\newcommand{\ent}{{\sf {ent}}}
\newcommand{\tpriv}{{\sf {\widetilde{priv}}}}
\newcommand{\priv}{{\sf {priv}}}
\newcommand{\ti}{\tilde{i}}
\newcommand{\alice}{\sf {Alice}}
\newcommand{\bob}{\sf {Bob}}
\newcommand{\referee}{\sf {Referee}}
\newcommand{\suppress}[1]{}
\newcommand{\cq}{\mathsf {c-q}}
\title{\bf New Results in the Simultaneous Message Passing Model}
\author{
Rahul Jain \thanks{Centre for Quantum Technologies and Department of Computer Science, National University of Singapore.
3 Science Drive 2, Singapore 117543. Email: {\sf rahul@comp.nus.edu.sg}. This work is supported by the National Research Foundation and Ministry of Education, Singapore. }
\\ National University of Singapore
\and
Hartmut Klauck \thanks{Centre for Quantum Technologies, National University of Singapore. 3 Science Drive 2, Singapore 117543. Email: {\sf hklauck@gmail.com}. This work is supported by the National Research Foundation and Ministry of Education, Singapore.}
\\ National University of Singapore
}
\date{}
\begin{document}

\maketitle
\begin{abstract}
Consider the following {\em Simultaneous Message Passing} ($\smp$) model for
computing a relation $f \subseteq \cX \times \cY \times \cZ$. In this
model $\alice$, on input $x \in \cX$ and $\bob$, on input $y \in \cY$,
send one message each to a third party $\referee$ who then outputs a
$z \in \cZ$ such that $(x,y,z) \in f$. We first show
optimal {\em Direct sum} results for all relations $f$ in this model, both in the quantum and
classical settings, in the situation where we allow shared resources
(shared entanglement in quantum protocols and public coins in
classical protocols) between $\alice$ and $\referee$ and $\bob$ and
$\referee$ and no shared resource between $\alice$ and $\bob$.
This implies that, in this model, the communication required to compute
$k$ simultaneous instances of $f$, with constant success overall, is at least $k$-times the communication
required to compute one instance with constant success.

This in particular implies an earlier Direct sum result, shown by Chakrabarti, Shi, Wirth and Yao~\cite{ChakrabartiSWY01}
for the Equality function (and a class of other so-called robust functions), in the classical $\smp$ model with no shared resources
between any parties.

Furthermore we investigate the gap between the $\smp$ model and the one-way model in communication complexity and exhibit
a partial function that is exponentially more expensive in the former if quantum communication with entanglement
is allowed, compared to the latter even in the deterministic case.
\end{abstract}

{\bf Keywords:} Direct Sum, Simultaneous Message Passing, Quantum, Communication Complexity, Information Theory.

\newpage

\section{Introduction}

\subsection{The Direct sum problem}
The Direct sum question  asks if computing $k$ instances of a given function or relation together, with constant success overall, requires $k$-times the resources required for computing one instance, with constant success. It is a widely studied question and its resolution in some settings lead to important consequences.  Karchmer, Raz, and Wigderson~\cite{KarchmerRW95} show that a Direct sum result for deterministic communication complexity of certain relations would probably imply ${\mathsf{NC}}^1 \not=
{\mathsf{NC}}^2$. Bar-Yossef, Jayram, Kumar, and
Sivakumar~\cite{Bar-YossefJKS04} use Direct sum results to prove space
lower bounds in the datastream model~\cite{Bar-YossefJKS04}.
P\v{a}tra\c{s}cu and Thorup~\cite{PatrascuT06} use Direct sum type
results to prove stronger lower bounds for approximate near-neighbor
(ANN) search in the cell probe model. Work on the Direct sum property
has also inspired earlier lower bounds for ANN due to Chakrabarti and
Regev~\cite{ChakrabartiR04}.

Although they seem highly plausible, it is well-known that
Direct sum results fail to hold for some modes of communication. For example, testing the equality of~$k = \log n$
pairs of~$n$-bit strings with a constant-error private-coin
communication protocol has complexity~$O(k\log k + \log n) =
O(\log n \log\log n)$ (see, e.g., \cite[Example~4.3,
page~43]{KushilevitzN97}), where we might expect a complexity
of~$\Omega(k\log n) = \Omega(\log^2 n)$.

We consider this question in certain Simultaneous Message Passing models of communication complexity and answer in the affirmative.
To be more precise, let $f \subseteq \cX \times \cY \times \cZ$ be a relation, where $\cX, \cY, \cZ$ are finite sets.  For a positive integer $k$, let's define the {\em $k$-fold product\/} of $f$, $f^{\otimes k} \subseteq \cX^{k}
\times \cY^{k} \times \cZ^{k} $ as $f^{\otimes k}\defeq \{ (x_1,
\ldots x_k, y_1, \ldots, y_k, z_1, \ldots, z_k): \forall i \in [k],
(x_i,y_i,z_i) \in f \}$. This relation captures~$k$ independent
instances of the relation~$f$.  Details of the $\smp$ models we consider and the definitions of corresponding communication complexities appear in Sec.~\ref{sec:prelimcomm}.
We show the following result.

\begin{theorem}[Direct sum]
\label{thm:direct}
Let $f \subseteq \cX \times \cY \times \cZ$ be a relation. Let $k$ be a positive integer. Let $\epsilon, \delta \in (0,1/4)$. Then,
\begin{enumerate}
\item  $\sQ^{\|, \tpriv}_{\epsilon} (f^{\otimes k}) \quad \geq \quad \Omega(k \cdot \delta^3 \cdot \sQ^{\|, \tpriv}_{\epsilon + \delta} (f)) \enspace .$
\item  $\sR^{\|, \tpriv}_{\epsilon} (f^{\otimes k}) \quad \geq \quad \Omega(k \cdot \delta^3 \cdot \sR^{\|, \tpriv}_{\epsilon + \delta}(f)) \enspace .$
\end{enumerate}
\end{theorem}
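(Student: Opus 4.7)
The plan is to prove both parts by the same template: given any $\tpriv$ SMP protocol $P$ for $f^{\otimes k}$ with error $\epsilon$ and total communication $c$, I extract a protocol $P'$ for a single copy of $f$ with error at most $\epsilon+\delta$ and communication $O(c/(k\delta^3))$, via information-theoretic message compression on each sender--Referee link. The classical argument (Part~2) is described below; Part~1 is parallel, replacing Shannon mutual information and the Harsha--Jain--McAllester--Radhakrishnan compression with their quantum counterparts (for instance, convex split or position-based decoding performed over the $\alice$--$\referee$ and $\bob$--$\referee$ shared entanglements).

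Applying Yao's minimax in the $\tpriv$ SMP model, fix a distribution $\mu$ on $\cX \times \cY$ witnessing $\sR^{\|,\tpriv}_{\epsilon+\delta}(f)$, and run $P$ on $(X,Y) = ((X_1,Y_1),\ldots,(X_k,Y_k)) \sim \mu^k$. Writing $M_A = M_A(X,R_A)$ and $M_B = M_B(Y,R_B)$ for the two messages, with $R_A, R_B$ the resources shared with the Referee, the length bound combined with the chain rule for mutual information yields
\[
\sum_{i=1}^k \bigl[\, I(X_i; M_A \mid X_{<i}, R_A) + I(Y_i; M_B \mid Y_{<i}, R_B) \,\bigr] \;\leq\; c,
\]
so by averaging there is an index $i^* \in [k]$ on which the summand is at most $c/k$. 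The reduction $P'$ embeds the single $f$-instance at coordinate $i^*$: the shared randomness between $\alice$ and $\referee$ in $P'$ now additionally specifies $X_{<i^*} \sim \mu_X^{i^*-1}$ (and symmetrically for $\bob$), and on input $(x,y)$ Alice sets $X_{i^*}=x$, privately draws $X_{>i^*} \sim \mu_X^{k-i^*}$, and considers the conditional distribution of $M_A$ that $P$ would produce. The Referee, knowing only $(X_{<i^*}, R_A)$, can compute the corresponding prior marginal by averaging over $X_{\geq i^*}$; Alice then uses one-shot compression to transmit a sample from her target within total variation $\delta$, using only her shared randomness with the Referee, and Bob compresses symmetrically. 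The Referee runs $P$'s Referee algorithm on the decompressed messages and outputs the $i^*$-th coordinate.

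For correctness, the worst-case guarantee of $P$ ensures that on any hybrid input in which coordinate $i^*$ is the honest $(x,y)$ and the other coordinates are arbitrary, the $i^*$-th output satisfies $f$ with probability at least $1-\epsilon$ given the true messages; combined with the $O(\delta)$ TV error from the two compressions this yields final error $\epsilon + O(\delta)$. For communication, each expected Kullback--Leibler divergence between sender target and Referee prior equals the chosen conditional mutual information, at most $c/k$, so a Markov step restricts to a $1-O(\delta)$ fraction of realizations on which the realized divergence is $O(c/(k\delta))$, and one-shot compression on this good set transmits $O(c/(k\delta^2))$ bits per side; absorbing the Markov loss into a final rescaling of $\delta$ delivers $O(c/(k\delta^3))$ bits overall. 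The main obstacle is that Alice and Bob share no joint resource in the SMP model, which rules out any compression scheme that requires common randomness between them; this is handled by performing the two compressions independently on the $\alice$--$\referee$ and $\bob$--$\referee$ links---exactly what the $\tpriv$ model permits---and this separation is what pins the argument to SMP and prevents it from being an artifact of a stronger model.
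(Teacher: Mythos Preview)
Your argument has a genuine gap at the very first step: you cannot ``apply Yao's minimax in the $\tpriv$ SMP model'' to witness $\sR^{\|,\tpriv}_{\epsilon+\delta}(f)$. Yao's principle equates $\max_\mu$ of distributional complexity with the \emph{public-coin} randomized complexity $\sR^{\|,\pub}$, not with $\sR^{\|,\tpriv}$; and these two can be exponentially far apart (for $\EQ_n$ one has $\sR^{\|,\pub}=O(1)$ while $\sR^{\|,\tpriv}=\Theta(\sqrt{n})$, and the paper cites a relation where $\sR^{\|,\pub}=O(\log n)$ but $\sQ^{\|,\tpriv}=\Omega(n^{1/3})$). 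Tracing your construction without the minimax crutch, your Markov step restricts to a $1-O(\delta)$ fraction of realizations of $(x,X_{<i^*},R_A)$ on which the KL is small; but this fraction is over the \emph{input} $x\sim\mu_X$ as well, so the resulting protocol $P'$ only has $\mu$-average error $\epsilon+O(\delta)$, not worst-case error. That upper-bounds $\sR^{\|,\pub}_{\epsilon+\delta}(f)$ and hence reproves the already-known public-coin Direct sum, but says nothing about $\sR^{\|,\tpriv}_{\epsilon+\delta}(f)$. The paper explicitly warns that ``any approach using arguments about distributional communication complexity is not possible to establish Thm.~\ref{thm:direct}, due to the inherent connection to public coin complexity.''

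The fix the paper uses is to replace your ``fix $\mu$, then bound average KL, then Markov'' pipeline by a worst-case one. Instead of a single distribution, it treats $x\mapsto\rho_x$ as a channel $A$, bounds its Holevo capacity $C(A)\le 2c$, uses super-additivity of capacity (Fact~\ref{fact:supadd}) to get a derived single-coordinate channel $A_i$ with $C(A_i)\le 4c/k$, and then invokes a min--max statement (Fact~\ref{fact:SleqC}) guaranteeing a single reference state $\tau_a$ with $S(A_i(x)\|\tau_a)\le C(A_i)$ for \emph{every} $x\in\cX$. Sharing (a purification of) $\tau_a$ as the $\alice$--$\referee$ resource, compression (Fact~\ref{fact:compress}) then costs $O(C(A_i)/\delta^3)$ bits for every input, with no Markov step and no appeal to a hard distribution. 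This is exactly the missing idea in your proposal: a prior that is good uniformly in $x$, which is what lets the argument land in the $\tpriv$ model rather than collapsing to $\pub$.
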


Here $\sQ^{\|, \tpriv}_{\epsilon} (f)$ denotes the communication complexity of a relation $f$ in the quantum simultaneous message passing model with no shared resources between $\alice$ and $\bob$, but shared entanglement between $\alice$ and $\referee$ resp.~$\bob$ and $\referee$. Similarly, for  $\sR^{\|, \tpriv}_{\epsilon} (f)$ $\alice$ and $\bob$ share no resources, but $\alice$ and $\referee$ have shared access to a source of random bits coin (not seen by $\bob$), $\bob$ and $\referee$ access to a different source (not seen by $\alice$).

Using standard arguments due to Newman~\cite{Newman91} one can show that for any relation $f \subseteq \cX \times \cY \times \cZ$,
$$ \sR^{\|, \tpriv}(f)) \quad \geq \quad  \Omega( \sR^{\|, \priv}(f) - O(\log |\cX|  + \log |\cY|) ) \enspace .$$

Hence we obtain the following corollary of Thm.~\ref{thm:direct}:
\begin{corollary}
\label{corr:directclass}
Let $f \subseteq \cX \times \cY \times \cZ$ be a relation and let $k$ be a positive integer.  Then,
 $$\sR^{\|, \priv} (f^{\otimes k}) \quad \geq \quad \sR^{\|, \tpriv}(f^{\otimes k}) \quad \geq \quad  \Omega(k \cdot \sR^{\|, \tpriv}(f)) \quad \geq \quad  \Omega(k  \cdot (\sR^{\|, \priv}(f) - O(\log |\cX| + \log |\cY|))) \enspace .$$
\end{corollary}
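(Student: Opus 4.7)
The plan is to chain the three claimed inequalities, each of which reduces either to an immediate inclusion of protocol classes, to a direct invocation of Theorem~\ref{thm:direct}, or to the Newman-style bound already stated in the paragraph preceding the corollary.

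\emph{First inequality} ($\sR^{\|,\priv}(f^{\otimes k}) \ge \sR^{\|,\tpriv}(f^{\otimes k})$). This is a trivial inclusion: any $\priv$ protocol (no shared resources of any kind) is also a valid $\tpriv$ protocol in which \alice\ and \referee\ (resp.\ \bob\ and \referee) simply ignore their shared random source. Hence any $\priv$ protocol for $f^{\otimes k}$ yields a $\tpriv$ protocol of the same cost and error, so the $\tpriv$ complexity can only be smaller.

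\emph{Second inequality} ($\sR^{\|,\tpriv}(f^{\otimes k}) \ge \Omega(k \cdot \sR^{\|,\tpriv}(f))$). I would simply apply part~2 of Theorem~\ref{thm:direct} with constant values of $\epsilon$ and $\delta$. Concretely, the bounded-error complexity $\sR^{\|,\tpriv}(\cdot)$ (with implicit constant error, say $\epsilon=1/4$) satisfies, upon choosing $\delta$ to be any small constant with $\epsilon+\delta<1/4$ (the theorem statement uses this range for error up to $1/4$; in any case a standard success-amplification absorbs constants), the inequality $\sR^{\|,\tpriv}_{\epsilon}(f^{\otimes k}) \ge \Omega(k \cdot \delta^{3} \cdot \sR^{\|,\tpriv}_{\epsilon+\delta}(f))$, and the $\delta^{3}$ factor is then just a hidden constant inside the $\Omega$.

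\emph{Third inequality} ($\Omega(k\cdot \sR^{\|,\tpriv}(f)) \ge \Omega(k\cdot(\sR^{\|,\priv}(f)-O(\log|\cX|+\log|\cY|)))$). This is obtained by multiplying the stated Newman-type bound $\sR^{\|,\tpriv}(f) \ge \Omega(\sR^{\|,\priv}(f)-O(\log|\cX|+\log|\cY|))$ through by $k$. The underlying argument, which I would recall briefly for completeness, is the standard Newman reduction applied separately to each of the two independent shared-randomness sources: in any $\tpriv$ protocol with error $\epsilon$, the shared string $r_A$ between \alice\ and \referee\ may (by a union-bound/averaging argument over inputs $(x,y)\in\cX\times\cY$) be restricted to a set of size $\mathrm{poly}(|\cX|,|\cY|,1/\epsilon)$ while increasing error by $o(1)$; \alice\ then samples $r_A$ privately and appends its $O(\log|\cX|+\log|\cY|)$-bit index to her message, and symmetrically \bob\ does the same for $r_B$.

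Chaining the three gives the corollary. The only point requiring thought is the third step, because Newman's argument is usually phrased for public randomness held by two communicating parties rather than by a sender and a referee; but here the independence of $r_A$ and $r_B$ actually makes life easier, since the two reductions are carried out independently, and no obstacle of substance arises.
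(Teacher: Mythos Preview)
Your proposal is correct and follows essentially the same approach as the paper: the paper does not spell out a proof but simply presents the corollary as an immediate consequence of Theorem~\ref{thm:direct} together with the Newman-style bound $\sR^{\|,\tpriv}(f)\ge\Omega(\sR^{\|,\priv}(f)-O(\log|\cX|+\log|\cY|))$ stated just before it, which is exactly your chain of three inequalities. Your additional remark on why the Newman reduction goes through separately and independently on the two shared-randomness sources is a helpful clarification that the paper omits.
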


Note that a similar result to Newman's is unknown for the quantum model (and probably does not hold), so we do not get a corresponding tight Direct sum result in the quantum case for the  $\smp$ model where no entanglement is shared between any pair among $\alice$/$\bob$/$\referee$.


\subsection{One-way vs.~simultaneous messages}

It is clear that one-way protocols, in which either $\alice$ or $\bob$ sends one message to the other
player, who then outputs the result, can easily simulate simultaneous message passing protocols, hence if we
denote the maximum of the one-way complexities (over the choice of the player sending the message)
by $\sR^1(f)$ (we will use similar notations for the other modes of communication), we immediately get conclusions like $\sR^1_\epsilon(f)\leq \sR^{\|}_{\epsilon}(f)$. But how much smaller can the one-way communication be compared to the  $\smp$-complexity?

For deterministic complexity it is easy to see that $\sD^1(f)=\Theta(\sD^{\|}(f))$ for all total functions $f$. Bar-Yossef et al. \cite{BarYossefJKS02} exhibit a total function $g$ for which $\sR^1(g)=O(\log n)$, while $\sR^\|(g)=\Omega(\sqrt n)$.

We first generalize this result to the quantum case, showing that $\sQ^{\|, \pub} (g)=\Omega(\sqrt n)$ as well. Just like in \cite{BarYossefJKS02} the lower bound is based on giving a lower bound for the Generalized Addressing Function of \cite{BabaiGKL03}. In fact all known lower bounds for this function are based on a certain subfunction, for which $\Omega(\sqrt n)$ is tight, whereas the exact complexity of the Generalized Addressing Function is open, see \cite{AmbainisL00} for the best known upper bound.
However, the proof of the above lower bound fails, when we allow entanglement between $\alice$ and $\bob$. So we consider a different partial function $f$ which has the desired behavior even if we allow arbitrary tripartite entanglement.

\begin{theorem} There is a partial Boolean function $f$ on $n$ inputs such that
$\sD^1(f)\leq\log n$, while $\sQ^{\|, \ent}(f)\geq\Omega(\sqrt n)$.
\end{theorem}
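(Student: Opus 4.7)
The plan is to exhibit a specific partial function $f$, verify a trivial deterministic one-way protocol of cost $\log n$, and then establish the $\Omega(\sqrt n)$ lower bound against entanglement-assisted $\smp$ protocols. Step three is where essentially all the work is.

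First, I would construct $f$ so that Alice's and Bob's inputs are linked by an algebraic ``coupling'' promise of the form: the input pair $(x,y)$ comes from a carefully chosen slice of $\cX\times\cY$ on which a single coordinate of the sender together with the receiver's entire input determines everything. A natural candidate is a partial function in which Alice holds $(r,s)\in[n]^2$, Bob holds $(c,t)\in[n]^2$, the promise is a linear relation such as $s+t\equiv r+c\pmod n$, and the Boolean output is $g(s,t)$ for a $g$ chosen to be hard in the two-message sense (for example derived from a hidden-matching or random-access-coding instance). The point of the promise is that either party, using his own full input and a single $\log n$-bit hint from the other, can recover both $s$ and $t$, but the Referee, who sees no $y$ and no $x$, cannot.

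Second, $\sD^1(f)\le\log n$ is handled by protocol design. In the $A\to B$ direction Alice sends $r$ using $\log n$ bits; Bob, who knows $(c,t)$, solves the promise relation for $s$, hence knows $(r,s,c,t)$, and outputs $g(s,t)$. The $B\to A$ direction is symmetric. This works precisely because the receiver possesses the complementary coordinate needed to invert the promise, a resource the $\smp$ Referee does not have.

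Third, and the main body of the proof, is the lower bound $\sQ^{\|,\ent}(f)\ge\Omega(\sqrt n)$. My plan is a reduction: show that any $\smp$ protocol with tripartite entanglement solving $f$ (on a suitable hard distribution over its promise set) can be transformed into a quantum encoding, in either Alice's or Bob's message, that implements a random-access code for $\Omega(n)$ bits, so by Nayak's lower bound it must use $\Omega(\sqrt n)$ qubits. Concretely, I would fix Bob's input to a uniformly random $(c,t)$, purify everything on Alice's side together with her half of the shared entanglement, and argue that Alice's message (together with the fixed shared entanglement and Bob's possible post-measurement state) must let the Referee recover, for any adaptively chosen $c$, the bit $g(s,r+c-s\bmod n)$. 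Choosing $g$ so that these bits, indexed by $c$, form an $\Omega(n)$-bit string that is essentially encoded by Alice alone converts the $\smp$ protocol into a quantum random-access code.

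The main obstacle is that entanglement is in principle very powerful: it can simulate shared public randomness and can correlate Alice's and Bob's messages in ways that classical $\smp$ cannot. Making the reduction survive this requires two delicate points. First, the promise and the choice of $g$ must be such that when we condition on Bob's input, Bob's half of the shared entanglement and his own message add no information beyond what can be absorbed into a fixed auxiliary state held by the decoder of the random-access code; this is where the ``tripartite'' aspect of the entanglement has to be controlled and is the reason the $\sqrt n$ lower bound proof for the Generalized Addressing Function breaks down. Second, the transformation from $\smp$ protocol to random-access code must preserve the success probability up to a constant, which requires averaging arguments over the chosen hard distribution together with Nayak's bound in the mixed-state setting. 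If those two points are handled, the $\Omega(\sqrt n)$ bound follows and, combined with the $\log n$ upper bound from step two, completes the proof.
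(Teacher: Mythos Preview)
Your overall strategy---reduce the $\smp$ protocol to a quantum random-access code and invoke Nayak's bound---is indeed what the paper does. But the concrete function you propose cannot work, and the gap is not a technicality.

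With $\alice$'s input $(r,s)\in[n]^2$ and $\bob$'s input $(c,t)\in[n]^2$, each player holds only $2\log n$ bits. Hence the trivial $\smp$ protocol in which each player sends their entire input to $\referee$ costs $O(\log n)$, so $\sQ^{\|,\ent}(f)=O(\log n)$ and no $\Omega(\sqrt n)$ lower bound is possible. Your random-access-code reduction reflects the same problem: you want $\alice$'s message to encode the $\Omega(n)$ bits $\{g(s,r+c-s):c\in[n]\}$, but $\alice$'s message is a function of only $2\log n$ input bits, so it carries at most $2\log n$ bits of information. Also note that your one-way upper bound sends $\log n$ bits, which equals a constant fraction of the input length here, not its logarithm; there is no exponential gap in your parameters.

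The paper's construction is different in a way that matters. $\alice$ gets an $n$-bit string $x$ and an index $i$; $\bob$ gets $n$ strings $y_1,\ldots,y_n\in\{0,1\}^n$ and an index $j$; the promise is $y_i=x$; the output is $x_j$. One-way protocols are cheap because $\bob$, on learning $i$, can recover $x=y_i$, and $\alice$, on learning $j$, already knows $x_j$. The point is that the side information enabling this is \emph{obfuscated}: $\bob$ holds $n$ candidate copies of $x$ but does not know which one is real. For the lower bound the paper does not fix $\bob$'s input; instead it argues information-theoretically that if $\bob$'s message is short ($\delta n$ qubits), then $I(JE_AE_RM_B:Y_i)$ is small for some $i$, so replacing the promise distribution by one in which $y$ is independent of $x$ changes the global state by at most $O(\sqrt\delta)$ in trace distance. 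After this decoupling the protocol computes $\mathrm{Index}(x,j)$ with small error, forcing $\alice$'s message to have $\Omega(n)$ qubits by Nayak. This decoupling step is exactly what handles the tripartite entanglement, and it requires that $\bob$'s input contain many independent blocks---a feature your $(c,t)\in[n]^2$ input does not have.
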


 Note that a similar result cannot be true for a total function (the function $g$ above only has a randomized upper bound for one-way protocols).

\subsection{Previous work on Direct sum}

Babai and Kimmel~\cite{BabaiK97}, following arguments as in Newman~\cite{Newman91}, show the following.
\begin{fact}[\cite{BabaiK97}] For a relation $f \subseteq \cX \times \cY \times \cZ$, let $\sD^{\|}(f)$ represent the deterministic communication complexity for computing $f$ in the $\smp$ model. Then, $\sR^{\|, \priv}(f) = \Omega(\sqrt{\sD^{\|}(f)}) \enspace .$
\end{fact}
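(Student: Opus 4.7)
The plan is to prove the contrapositive: given a private-coin SMP protocol $\Pi$ for $f$ with error $1/3$ and communication $c:=\sR^{\|,\priv}(f)$, construct a deterministic SMP protocol for $f$ with communication $O(c^2)$. Rearranging gives $\sR^{\|,\priv}(f)=\Omega(\sqrt{\sD^{\|}(f)})$. The derandomization follows Newman~\cite{Newman91} and combines parallel repetition with the probabilistic method.

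In the amplification step, Alice sends $T$ independent samples $M_A(x,r_A^1),\ldots,M_A(x,r_A^T)$ of her message, each generated with fresh private randomness; Bob symmetrically sends $T$ samples. The referee outputs the plurality of $R(M_A(x,r_A^i),M_B(y,r_B^j))$ over all $T^2$ cross-pairs, where $R$ is the referee function of $\Pi$. For any fixed input $(x,y)$, the probability over the $2T$ seeds that this plurality disagrees with a correct output is $2^{-\Omega(T)}$. The $T^2$ indicator summands are not independent, since pairs sharing an index share a seed, so a naive Chernoff bound fails; instead one invokes McDiarmid's bounded-differences inequality, viewing the count as a function of the $2T$ independent seeds and noting that each seed influences at most $T$ of the pairwise outputs, which gives the required exponential tail.

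In the derandomization step, a union bound over all inputs $(x,y)\in\cX\times\cY$ shows that for $T$ sufficiently large, some fixed setting $r^{\ast}$ of the $2T$ seeds makes the amplified protocol correct on every input simultaneously. Fixing such $r^{\ast}$ and taking $T=\Theta(c)$ yields a deterministic SMP protocol of communication $O(Tc)=O(c^2)$; in the regime where the union bound would demand $T$ larger than $\Theta(c)$, the trivial upper bound $\sD^{\|}(f)\leq\log|\cX|+\log|\cY|$ can be absorbed directly into the dominating $c^2$ term. The principal technical obstacle is the concentration step, where the $T^2$ summands are coupled through shared seeds; McDiarmid's inequality (or an equivalent Hoeffding bound for two-sample $U$-statistics) sidesteps the independence requirement and yields the $2^{-\Omega(T)}$ tail needed for the union bound.
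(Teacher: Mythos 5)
The amplification-plus-concentration step is fine: the plurality over the $T^2$ cross-pairs is a function of the $2T$ independent seeds, changing one seed affects at most $T$ summands, and McDiarmid (or an equivalent two-sample $U$-statistic bound) indeed gives a $2^{-\Omega(T)}$ tail per input. But the derandomization step has a genuine gap. A union bound over $(x,y) \in \cX \times \cY$ forces $T = \Omega(\log|\cX| + \log|\cY|)$, and your fallback — ``in the regime where $T$ must exceed $\Theta(c)$, the trivial bound $\sD^{\|}(f)\le\log|\cX|+\log|\cY|$ is absorbed into $c^2$'' — is exactly the statement being proved, not a trivial observation. In that regime $\log|\cX|+\log|\cY| \gg c$, and nothing you have written shows it is $O(c^2)$. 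Indeed, after the harmless preprocessing of keeping one representative of each distinct row and column, $\log|\cX|+\log|\cY|=\Theta(\sD^{\|}(f))$ (since $\sD^{\|}$ of a total relation is, up to constants, $\log(\#\text{rows})+\log(\#\text{cols})$), so your chain of inequalities reads $\sD^{\|}(f) \le O(Tc) = O(c\cdot(\log|\cX|+\log|\cY|)) = O(c\cdot\sD^{\|}(f))$ — a tautology. Concretely, for $\EQ_n$ it yields only $\sR^{\|,\priv}(\EQ_n)=\Omega(1)$, not the $\Omega(\sqrt n)$ that the statement promises.

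Put differently, a na\"ive Newman-style union bound over inputs proves the weaker inequality $\sD^{\|}(f) = O\bigl(\sR^{\|,\priv}(f)\cdot(\log|\cX|+\log|\cY|)\bigr)$, which is only $O(c^2)$ when $c=\Omega(\log|\cX|+\log|\cY|)$; the interesting regime (e.g.~Equality, where $c=\Theta(\sqrt{\sD^{\|}})\ll\sD^{\|}$) is exactly the one your argument does not cover. The actual content of the Babai--Kimmel theorem is a bound on the number of distinct rows and columns of $f$ in terms of $c$ alone (roughly $2^{O(c^2)}$), obtained by a structural argument about the family of message distributions $\{\mu_x\}$, not by derandomizing over all inputs; constructing an explicit deterministic protocol of communication $O(Tc)$ with $T$ picked by a union bound over $\cX\times\cY$ cannot beat the trivial upper bound and so cannot establish the claim. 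You need to replace the union bound over inputs with that structural counting argument (or otherwise make $T$ depend only on $c$).
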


The Direct sum result for $\sD^{\|}(f)$ is easy to show and hence one can derive the following Direct sum result for $\sR^{\|, \priv}(f)$\footnote{Note that this result is weaker than our result Corr.~\ref{corr:directclass}, whenever $\sR^{\|, \priv}(f) = \Omega(\log |\cX| + \log |\cY|)$.}.
\begin{fact}[Implicit from~\cite{BabaiK97}] Let $f \subseteq \cX \times \cY \times \cZ$ be a relation and let $k$ be a positive integer. Then,
$$\sR^{\|, \priv}(f^{\otimes k}) = \Omega(\sqrt{\sD^{\|}(f^{\otimes k})}) = \Omega(\sqrt{k \cdot \sD^{\|}(f) }) = \Omega(\sqrt{k \cdot \sR^{\|, \priv}(f) }) \enspace .$$
\end{fact}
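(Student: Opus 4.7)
The plan is to chain the Babai–Kimmel fact $\sR^{\|, \priv}(f) = \Omega(\sqrt{\sD^{\|}(f)})$ stated just above with a direct sum for deterministic SMP, namely $\sD^{\|}(f^{\otimes k}) = \Omega(k \cdot \sD^{\|}(f))$.

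To see the deterministic direct sum, note that a deterministic SMP protocol for $f$ is specified by Alice's and Bob's message functions (inducing partitions of $\cX$ and $\cY$ into $r_A$ and $r_B$ classes respectively) together with a Referee output rule assigning each rectangle of the induced partition of $\cX \times \cY$ a single output consistent with $f$; the minimum cost is $\sD^{\|}(f) = \lceil\log r_A\rceil + \lceil\log r_B\rceil$ minimized over valid partition pairs. Now consider any deterministic SMP protocol for $f^{\otimes k}$. If two inputs $(x_1,\ldots,x_k)$ and $(x_1',\ldots,x_k')$ share Alice's message, then for every Bob input $(y_1,\ldots,y_k)$ the Referee must output a single $k$-tuple $(z_1, \ldots, z_k)$ with $(x_i, y_i, z_i), (x_i', y_i, z_i) \in f$ for all $i$; fixing the other $k-1$ coordinates on Bob's side to arbitrary values, we see that $x_i$ and $x_i'$ must be equivalent under the underlying single-instance protocol for every $i$. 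Hence Alice's partition for $f^{\otimes k}$ refines the $k$-fold product of her optimal single-instance partition, forcing at least $r_A^k$ classes and thus at least $k \log r_A$ bits from Alice; symmetrically Bob needs at least $k \log r_B$ bits. Summing yields $\sD^{\|}(f^{\otimes k}) \geq k \cdot \sD^{\|}(f) - O(k)$.

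Applying Babai–Kimmel to $f^{\otimes k}$ and substituting gives
$$\sR^{\|, \priv}(f^{\otimes k}) = \Omega\bigl(\sqrt{\sD^{\|}(f^{\otimes k})}\bigr) = \Omega\bigl(\sqrt{k \cdot \sD^{\|}(f)}\bigr) \geq \Omega\bigl(\sqrt{k \cdot \sR^{\|, \priv}(f)}\bigr),$$
where the final inequality is the trivial $\sD^{\|}(f) \geq \sR^{\|, \priv}(f)$. Nothing here presents a genuine obstacle; the mildly delicate step is the partition-refinement argument for relations, where one must check that the forced-output argument really does yield single-instance equivalence coordinate by coordinate, and this falls out immediately from the product structure of $f^{\otimes k}$. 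The additive $O(k)$ loss from the ceiling functions in the deterministic direct sum is absorbed by $\Omega(\cdot)$ whenever $\sD^{\|}(f)$ is not bounded by a constant, and in the remaining trivial regime the claim requires nothing to prove.
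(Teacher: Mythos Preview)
Your high-level plan is exactly what the paper does: it applies the Babai--Kimmel bound to $f^{\otimes k}$ and then invokes a direct sum for $\sD^{\|}$, which the paper simply declares ``easy to show'' without supplying an argument. So at the level of approach there is nothing to compare --- the paper gives no proof of its own for this fact.

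Your partition-refinement argument for the deterministic step is clean for \emph{functions}: there the minimal Alice partition is the partition of $\cX$ into classes of identical rows, it is unique, and any valid Alice partition for $f^{\otimes k}$ must indeed refine its $k$-th power, giving the $r_A^k$ lower bound. For \emph{relations}, however, the phrase ``$x_i$ and $x_i'$ must be equivalent under the underlying single-instance protocol'' hides a real gap. There is no canonical single-instance protocol extracted from the $f^{\otimes k}$ protocol; the binary relation ``$x$ and $x'$ have a common valid output for every $y$'' is not transitive; and a valid Alice partition for $f^{\otimes k}$ need not refine any product of valid single-instance partitions. Concretely, take $\cX=\{a,b,c\}$, $\cY=\{*\}$, and $f(a,*)=\{0,1\}$, $f(b,*)=\{1,2\}$, $f(c,*)=\{2,0\}$. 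Here $\sD^{\|}(f)=1$ (any two of $a,b,c$ share an output, but not all three), yet the three-class partition
\[
\{aa,ab,ba,bb\},\quad \{bc,cb,cc\},\quad \{ac,ca\}
\]
is a valid Alice partition for $f^{\otimes 2}$ (with outputs $(1,1)$, $(2,2)$, $(0,0)$ respectively) and does not refine \emph{any} product of two $2$-class single-instance partitions --- so your refinement claim fails and the bound $r_A^k = 4$ on the number of classes is violated. The $\Omega(k\cdot\sD^{\|}(f))$ bound still survives for relations, but it needs a counting argument (bounding the size of any valid class) rather than the refinement argument you gave.
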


Chakrabarti, Shi, Wirth and Yao~\cite{ChakrabartiSWY01} consider the Direct sum problem in the
private coins $\smp$ model and show the following result.
For a function $f:\{0,1\}^n \times \{0,1\}^n \rightarrow \{0,1\}$, let
$\tilde{\sR}^{{\|, \priv}}(f) \defeq \min_{S} \sR^{\|, \priv} (f|_{S\times S})$,
where $S$ ranges
over all subsets of $\{0,1\}^n$ of size at least $(\frac{2}{3}) 2^n$ and $f|_{S\times S}$ denotes the function $f$ restricted to inputs $x,y$ both from the set $S$. It is easily seen that $\tilde{\sR}^{{\|, \priv}}(f) \leq \sR^{{\|, \priv}}(f).$
\begin{fact}[\cite{ChakrabartiSWY01}] Let $k$ be a positive integer. Then,
$$\sR^{\|, \priv}(f^{\otimes k})= \Omega(k \cdot (\tilde{\sR}^{\|, \priv}(f)-O(\log n))) \enspace .$$
\end{fact}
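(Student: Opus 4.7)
The plan is to follow the information-theoretic recipe for direct sum in the private-coins $\smp$ model: identify a ``good'' coordinate using the mutual-information chain rule, then compress the single-coordinate residual protocol into a genuine $\smp$ protocol on a large subdomain of $\{0,1\}^n$.

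I would start from a private-coins $\smp$ protocol $P$ for $f^{\otimes k}$ of error $\epsilon$ and total communication $c=\sR^{\|,\priv}(f^{\otimes k})$, placed against the uniform product distribution on inputs, so that $(X_i,Y_i)$ are i.i.d.\ uniform on $\{0,1\}^n\times\{0,1\}^n$. In the $\smp$ model Alice's message $M_A$ is a (privately randomized) function of $X=(X_1,\dots,X_k)$ alone, and Bob's $M_B$ of $Y$ alone. Since $|M_A|+|M_B|\le c$, Shannon's bound yields $I(X;M_A)\le c$ and $I(Y;M_B)\le c$. Independence of the $X_i$'s and the chain rule give $\sum_i I(X_i;M_A\mid X_{<i})\le c$, and a similar identity for Bob, so by averaging there is a coordinate $i^\star$ for which both $I(X_{i^\star};M_A\mid X_{<i^\star})$ and $I(Y_{i^\star};M_B\mid Y_{<i^\star})$ are $O(c/k)$.

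Next, I would apply Markov's inequality to fix the off-coordinate inputs $(x_{<i^\star},x_{>i^\star})$ and $(y_{<i^\star},y_{>i^\star})$ non-uniformly (hardcoded in the new protocol) to a choice that simultaneously (a) keeps the error on coordinate $i^\star$ close to $\epsilon$ on most $(X_{i^\star},Y_{i^\star})$, and (b) keeps the conditional information about $X_{i^\star}$ in $M_A$, and about $Y_{i^\star}$ in $M_B$, at the level $O(c/k)$. Inputs $x_{i^\star}\in\{0,1\}^n$ on which the residual error blows up form at most a $1/3$ fraction; restricting to the complement yields the subdomain $S\subseteq\{0,1\}^n$ of size $\ge (2/3)\cdot 2^n$ required by the definition of $\tilde{\sR}^{\|,\priv}(f)$, and symmetrically for $y_{i^\star}$.

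The main obstacle is the final compression step: in the private-coins $\smp$ model no randomness is shared between any pair of players, so one cannot directly invoke Slepian--Wolf style compression against a shared prior. I would replace it by a rejection-sampling style argument (in the spirit of Harsha--Jain--McAllester--Radhakrishnan): Alice samples $M_A$ from her conditional distribution and transmits a short label identifying $M_A$ within a typical set of size $2^{O(c/k)+O(\log n)}$, with the extra $O(\log n)$ slack absorbing the tails of the compression and the identification of $M_A$ within the typical set so that the overall error remains a small constant. Compressing $M_B$ symmetrically yields a private-coins $\smp$ protocol for $f|_{S\times S}$ of communication $O(c/k)+O(\log n)$, so $\tilde{\sR}^{\|,\priv}(f)\le O(c/k)+O(\log n)$, and rearranging gives $c=\sR^{\|,\priv}(f^{\otimes k})\ge \Omega\!\left(k\cdot\bigl(\tilde{\sR}^{\|,\priv}(f)-O(\log n)\bigr)\right)$, as claimed. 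The delicate point throughout is keeping the measures of the bad events (large residual error, large residual information, compression failure) summing to at most $1/3$, so that the recovered $\smp$ protocol for $f$ genuinely operates on a domain of the required density.
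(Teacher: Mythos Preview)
The paper does not prove this statement at all; it appears in the ``Previous work on Direct sum'' subsection as a cited result from~\cite{ChakrabartiSWY01}, with no proof or proof sketch given anywhere in the paper (the appendix proves only Facts~\ref{fact:lowinfent}, \ref{fact:supadd}, \ref{fact:SleqC}, \ref{fact:compress}, \ref{fact:compressclass}). So there is nothing here to compare your proposal against.

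On the proposal itself: the first two steps (chain rule to isolate a low-information coordinate, then fix the off-coordinate inputs and pass to a dense subdomain) are the standard template and are fine. The compression step, however, is not sound as written. You correctly flag that rejection sampling in the Harsha--Jain--McAllester--Radhakrishnan sense needs shared randomness between sender and receiver, which the $\priv$ model forbids; but your proposed workaround --- that Alice ``transmits a short label identifying $M_A$ within a typical set of size $2^{O(c/k)+O(\log n)}$'' --- is not a well-defined procedure. A bound $I(X_{i^\star};M_A)\le O(c/k)$ does \emph{not} imply that for each fixed $x_{i^\star}$ the message distribution is (close to) supported on a small set that the $\referee$ can identify without shared randomness; mutual information controls average relative entropy to the marginal $P_{M_A}$, not pointwise support size. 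The clean way to finish is a two-step argument: first compress with shared randomness between $\alice$/$\referee$ and $\bob$/$\referee$ (Fact~\ref{fact:compressclass}) to get an $\tpriv$-type protocol of cost $O(c/k)$, and then invoke Newman's argument to strip the shared randomness at an additive $O(\log n)$ cost --- which is exactly the origin of the $-O(\log n)$ term in the statement. Your write-up conflates these two steps into one vague ``typical set'' assertion, and as stated the argument does not go through.
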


For the Equality function $\EQ_n : \{0,1\}^n \times \{0,1\}^n \rightarrow \{0,1\}$, in which the $\referee$ outputs $1$ iff the inputs of $\alice$ and $\bob$ are equal, it can easily be seen that $\tilde{\sR}^{\|, \priv}(\EQ_n) = \Theta(\sR^{\|, \priv}(\EQ_n)) $.  Hence the above result provides an optimal Direct sum result for $\EQ_n$.

In the $\smp$ models in which $\alice$ and $\bob$ share public coins, optimal Direct sum results have been shown earlier by Jain, Radhakrishnan and Sen~\cite{JainRS05}.
\begin{fact}[Direct sum~\cite{JainRS05}]
Let $f \subseteq \cX \times \cY \times \cZ$ be a relation. Let $k$ be a positive integer. Let
$\epsilon, \delta \in (0,1/4)$, then
\begin{enumerate}
\item $\sQ^{\|, \pub}_\epsilon (f^{\otimes k}) \quad \geq \quad
 \Omega\left(k \cdot \delta^3 \cdot \sQ^{\|,\pub}_{\epsilon + \delta} (f)\right).$
\item $\sR^{\|, \pub}_\epsilon (f^{\otimes k}) \quad \geq \quad
 \Omega\left(k \cdot \delta^3 \cdot \sR^{\|,\pub}_{\epsilon+ \delta} (f)\right).$
\end{enumerate}
\end{fact}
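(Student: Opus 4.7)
The plan is to follow the information-theoretic route for direct sum theorems, adapted to the public-coin SMP model. I will spell out the argument for the quantum claim (part 1); the classical claim is the direct analog, with Shannon mutual information in place of quantum mutual information and a classical message-compression step in place of the quantum substate theorem.

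First I would fix a distribution $\mu$ on $\cX \times \cY$ that maximizes the distributional complexity of $f$ at error $\epsilon + \delta$, and consider an optimal protocol $\Pi$ for $f^{\otimes k}$ with communication $C := \sQ^{\|,\pub}_\epsilon(f^{\otimes k})$, run on inputs drawn from $\mu^{\otimes k}$. Let $M_A,M_B$ denote Alice's and Bob's quantum messages and $R$ the public coins; define the information cost
\[
IC_{\mu^{\otimes k}}(\Pi) \;:=\; I(X : M_A \mid R) + I(Y : M_B \mid R),
\]
where $X = (X_1,\dots,X_k)$, $Y = (Y_1,\dots,Y_k)$, and the mutual informations are quantum ones computed in the classical-quantum state where inputs sit in classical registers and messages in quantum registers. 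Since $|M_A|,|M_B| \leq C$, the Holevo bound gives $IC_{\mu^{\otimes k}}(\Pi) \leq 2C$.

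Next I would apply the chain rule for mutual information, exploiting the product structure of $\mu^{\otimes k}$, to write
\[
IC_{\mu^{\otimes k}}(\Pi) \;=\; \sum_{i=1}^{k}\Bigl[\, I(X_i : M_A \mid R, X_{<i}) + I(Y_i : M_B \mid R, Y_{<i}) \,\Bigr],
\]
so some index $i^*$ has per-coordinate information cost at most $2C/k$. To extract a single-instance protocol, Alice and Bob use the public coins to jointly sample the dummy coordinates from the appropriate conditional marginals of $\mu^{\otimes k}$, embed the true input $(x,y)$ at position $i^*$, and run $\Pi$; the referee reads off the $i^*$-th coordinate of its answer. Because the public coins correlate Alice's and Bob's dummy samples, the joint distribution of $(X_{i^*}, Y_{i^*}, M_A, M_B, R)$ is exactly the one that arises when inputs are drawn from $\mu^{\otimes k}$ with position $i^*$ clamped to $(x,y)$. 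Hence this single-instance protocol has average-case error at most $\epsilon$ under $\mu$ and information cost at most $2C/k$.

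Finally, to convert this low-information-cost protocol into a low-communication one, I would invoke the Jain--Radhakrishnan--Sen quantum substate theorem on each of Alice's and Bob's messages separately. For each player the substate theorem replaces the message with one of length $O((C/k)/\delta^3)$ qubits, changing the induced state seen by the referee by at most $\delta$ in trace distance, and crucially Alice's compression depends only on her input and the public randomness (likewise for Bob's). Composing the two independent compressions produces an SMP protocol for $f$ on $\mu$ with error at most $\epsilon + 2\delta$ and communication $O(C/(k\delta^3))$; comparing with $\sQ^{\|,\pub}_{\epsilon+\delta}(f)$ and absorbing constants into $\delta$ yields $C \geq \Omega(k\delta^3 \sQ^{\|,\pub}_{\epsilon+\delta}(f))$. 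The main obstacle is this compression step: one must apply the substate theorem in a manner that respects the SMP structure, so that each party compresses its own message using only its input and the public randomness without reference to the other's message, and one must carefully combine the two independent compression errors. The $\delta^3$ factor is intrinsic to the substate theorem, which is the same technical ingredient that produces the $\delta^3$ in Theorem~\ref{thm:direct}.
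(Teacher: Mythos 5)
The statement you are proving is not actually proven in this paper: it is cited from Jain, Radhakrishnan and Sen \cite{JainRS05}. What the paper does prove, by a deliberately different method, is the analogous direct sum for the $\sQ^{\|,\tpriv}$/$\sR^{\|,\tpriv}$ models (Theorem~\ref{thm:direct}). So the useful comparison is between your distributional, information-cost sketch and the paper's channel-capacity argument for Theorem~\ref{thm:direct}, and here the two routes are genuinely different. Your plan fixes a hard distribution $\mu$, runs $\Pi$ on $\mu^{\otimes k}$, applies a mutual-information chain rule to find a cheap coordinate $i^*$, embeds a single instance at $i^*$ using the Alice--Bob public coins to sample the dummy coordinates, and then compresses each message by the substate theorem. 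The paper instead treats $x\mapsto\rho_x$ and $y\mapsto\sigma_y$ as classical-quantum channels, uses super-additivity of channel capacity (Fact~\ref{fact:supadd}) to isolate a coordinate with small \emph{capacity}, and then invokes Fact~\ref{fact:SleqC} to obtain a single reference state $\tau$ with $S(A_i(x)\|\tau)\le C(A_i)$ \emph{for every} $x$, so that Fact~\ref{fact:compress} gives a worst-case communication bound immediately. The paper explicitly advertises this as an improvement over the earlier ``average-case'' arguments, and the reason it had to be this way is structural: in the $\tpriv$ model there are no Alice--Bob public coins, so your embedding step (joint sampling of correlated dummy $(x_j,y_j)$ pairs) is simply unavailable. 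Your argument is therefore intrinsically limited to the $\pub$ model.

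Within the $\pub$ model, your sketch glosses over a point that the paper's method was specifically designed to avoid. After finding $i^*$ with $I(X_{i^*}:M_A\mid R,X_{<i^*})\le 2C/k$, you want to compress Alice's message against a reference state $\tau$, paying $O(S(\rho_{x}\|\tau)/\delta^3)$ qubits. But the chain-rule bound only controls $\E_x[S(\rho_x\|\tau)]$ (via Eq.~\ref{eq:altchar}); for an individual $x$ the relative entropy can be arbitrarily large, so the communication of the derived single-instance protocol is not bounded in the worst case. Fixing this requires a Markov/truncation step (abort when the compression overshoots a threshold), which costs an extra $\delta$ in both the error and the threshold and threatens the stated $\delta^3$. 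The paper's Fact~\ref{fact:SleqC}, proven via a min-max argument, gives a \emph{uniform} bound $S(A_i(x)\|\tau)\le C(A_i)$ for all $x$ and sidesteps this entirely. Two further points you should address if you want this sketch to become a proof: (i) the reference state used in the compression is naturally conditioned on everything the Referee is allowed to know after the embedding, which in the public-coin embedding means conditioning on all of $X_{\neq i^*}$, and $I(X_{i^*}:M_A\mid R,X_{\neq i^*})$ can exceed the chain-rule term $I(X_{i^*}:M_A\mid R,X_{<i^*})$, so the ``per-coordinate information cost at most $2C/k$'' claim does not transfer automatically to the quantity you actually need; and (ii) you end up with a low-communication protocol whose error is only bounded \emph{under $\mu$}, so to compare with $\sQ^{\|,\pub}_{\epsilon+\delta}(f)$ you implicitly invoke a Yao-type minimax, which is not automatic for quantum protocols with unbounded prior entanglement and should at least be flagged. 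None of these obstacles arise in the paper's worst-case channel approach.
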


Note that for the Equality function there is an exponential gap between the classical randomized public coin $\smp$- and the
private public coin $\smp$-complexity (see e.g.~\cite{BabaiK97}). A similar exponential gap is known for a relation in the quantum model \cite{GavinskyKRW06}, i.e.~there is a relation $r$ with $\sR^{\|, \pub}(r)\leq\log n$ and $\sQ^{\|, \tpriv}(r)\geq\Omega(n^{1/3})$ (the paper states only a $\sQ^{\|, \priv}$ bound, but the proof can be extended easily). Hence the previous results in the public model do not imply ours, and in particular any approach using arguments about distributional communication complexity is not possible to establish Thm.~\ref{thm:direct}, due to the inherent connection to public coin complexity. Furthermore we believe our proof is simpler than the proofs of the
classical Direct sum result by \cite{ChakrabartiSWY01} and the above result. This is achieved by viewing the communications from $\alice$/$\bob$ as a communication
channel in the Shannon sense (i.e.~not fixing the underlying probability distributions of the maps from inputs to messages), which allows for worst case message compression as opposed to the previous average case arguments.

\subsection{Organization} In the next section we present the necessary definitions and facts that are subsequently used in our proofs. In Sec.~\ref{sec:direct} we present the proofs of our Direct sum results.  Sec.~\ref{sec:comparing} contains the results comparing one-way- to $\smp$-complexity. We conclude in Sec.~\ref{sec:conc} with some open problems. For completeness, in Sec.~\ref{sec:proofs}, we present the proofs of the earlier known facts that we use in this work.

\section{Preliminaries}

\subsection{Information theory} For an operator $A$, its {\em trace norm} is defined to be $\tn{A} \defeq \Tr \sqrt{A^\dagger A}$. We use the {\em bra-ket} notation in which a vector is represented as $\ket{\phi}$ and its adjoint is represented as $\bra{\phi}$. A {\em quantum state} is a positive semi definite trace one operator. A {\em pure state} is a quantum state of rank one and is often represented by its sole eigenvector with non-zero eigenvalue. For a quantum state $\rho$ in Hilbert space $\cH$, a pure state $\ket{\phi} \in \cH \otimes \cK$ is called its {\em purification} if $\Tr_\cK \ket{\phi}\bra{\phi} = \rho$. For a quantum state $\rho$, its {\em von-Neumann
 entropy} is defined as $S(\rho) \defeq \sum_i -\lambda_i \log
\lambda_i$, where $\lambda_i$s represent the various eigenvalues of
$\rho$.  It is easily seen that for an $l$ qubit quantum system $A$ with state $\rho_A$, $S(A)\defeq  S(\rho_A) \leq l$.  For systems $A,B$ their {\em mutual information} is defined as $I(A:B)
\defeq S(A) + S(B) - S(AB)$.  Given
quantum states $\rho,\sigma$, their {\em relative entropy} is defined
as $S(\rho \| \sigma) \defeq \Tr \rho (\log \rho - \log \sigma)$. For
a joint classical-quantum system $XM$, where $X$ is a classical random
variable, let state of $M | (X=x)$ be $\rho_x$. Let $\rho \defeq \E_{x \leftarrow
X}[\rho_x]$. Then we have an alternate characterization of $I(X:M)$ as follows:
\begin{equation} \label{eq:altchar} I(X:M)= \E_{x \leftarrow X}[S(\rho_x \| \rho)] \enspace. \end{equation}

For classical random variables the analogous definitions and facts hold {\em mutatis mutandis}.

\subsection{Communication complexity}
\label{sec:prelimcomm}
\subsubsection*{Quantum communication complexity}  In a Simultaneous Message Passing ($\smp$) quantum communication protocol $\cP$ for computing a
relation $f \subseteq \cX \times \cY \times \cZ$,  $\alice$ and
$\bob$ get inputs $x \in \cX$ and $y \in
\cY$ respectively. They each send a message to a third party called {\sf
Referee}. The $\referee$ then outputs a $z \in \cZ$ such that
$(x,y,z) \in f$. The internal computations and messages send by the parties can be quantum. On any input pair $(x,y)$, the protocol can err with a small probability. The relations we consider are always total in the sense that for
every $(x,y) \in \cX \times \cY$, there is at least one $z \in
\cZ$, such that $(x,y,z) \in f$. There are four models of quantum $\smp$ protocols that we consider. Given $\epsilon \in (0,1/2)$, the communication
complexity in any given model is defined to be the communication of the
best $\smp$ protocol in that model, with error at most $\epsilon$ on all inputs.  In the first model there is no shared resource between any of the parties and the communication
complexity in this model is denoted by $\sQ^{\|, \priv}_\epsilon (f)$. In the second model we allow prior
entanglement to be shared between $\alice$ and $\referee$,
$\bob$ and $\referee$, but no shared resource between $\alice$ and $\bob$. The entangled state for $\alice$ and $\referee$ is independent of the entangled state for $\bob$ and $\referee$.
The communication complexity in this model is denoted by $\sQ^{\|, \tpriv}_\epsilon (f)$.
In the third model, we allow prior entanglement to be shared between $\alice$ and $\referee$,
$\bob$ and $\referee$, and public coins to be shared between $\alice$ and $\bob$. The communication complexity in this model is  denoted by $\sQ^{\|, \pub}_\epsilon (f)$. Finally, in Sec.~\ref{sec:comparing} we will also consider the model, in which $\alice$, $\bob$, and $\referee$ share an arbitrary entangled tripartite state and the communication complexity in this model is  denoted by $\sQ^{\|, \ent}_\epsilon (f)$.
Whenever the error parameter $\epsilon$ is not specified it is assumed to be $1/3$.

\subsubsection*{Classical communication complexity} In the classical models, the internal computations by the parties and the messages sent are classical.
Similar to the quantum case, we consider three models of classical $\smp$ protocols. In the first model, there is no shared resource between any of the parties and the communication complexity is denoted by $\sR^{\|, \priv}_\epsilon(f)$.
In the second model, we let the public coins to be
shared between $\alice$  and $\referee$, $\bob$ and $\referee$ and no shared resource between $\alice$ and $\bob$. The communication complexity in this model is denoted by $\sR^{\|, \tpriv}_\epsilon(f)$.  In the third model, we let public coins to be shared between $\alice$  and $\referee$, $\bob$ and $\referee$ and
between $\alice$ and $\bob$. The communication complexity in this
model is denoted by $\sR^{\|, \pub}_\epsilon(f)$. As before whenever error parameter $\epsilon$ is not specified it is assumed to be $1/3$.

\subsection{Useful facts}
Here we present some known facts that will subsequently be
 useful in our proofs. We provide proofs for some of them
in Sec.~\ref{sec:proofs} for completeness.
We state them here in the quantum case. In the classical case, these hold {\em mutatis mutandis} by replacing
quantum states by probability distributions and we avoid making
explicit statements and proofs.

The following fact is probably folklore and appears among other places for example in~\cite{JainRS05}.
\begin{fact}
\label{fact:lowinfent}
Let $XMN$ be a tri-partite system with $X$ being a classical system. If $I(X:M)=0$ then $I(X:MN) \leq 2S(N)$.
\end{fact}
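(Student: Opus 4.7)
The plan is to expand $I(X:MN)$ directly as $S(MN) - S(MN|X)$ and bound the two pieces separately, using the hypothesis $I(X:M)=0$ to freeze the conditional marginal of $M$. Since $X$ is classical, $I(X:M)=0$ forces the reduced state $\rho_M^x \defeq \Tr_N \rho_{MN}^x$ to equal the overall marginal $\rho_M$ for every value $x$; in particular $S(\rho_M^x) = S(M)$ is independent of $x$. This is the single structural fact about the hypothesis that the proof exploits.

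For the upper term, I would invoke plain subadditivity $S(MN) \leq S(M) + S(N)$. For the lower term $S(MN|X) = \E_{x \leftarrow X}\, S(\rho_{MN}^x)$, the right tool is the Araki--Lieb triangle inequality $S(AB) \geq S(A) - S(B)$, applied pointwise in $x$ to obtain $S(\rho_{MN}^x) \geq S(\rho_M^x) - S(\rho_N^x) = S(M) - S(\rho_N^x)$. Taking expectation over $x$ and using concavity of the von Neumann entropy (equivalently $I(X:N)\geq 0$, giving $\E_x S(\rho_N^x) \leq S(N)$) then yields $S(MN|X) \geq S(M) - S(N)$.

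Combining the two estimates, the $S(M)$ contributions cancel and one is left with $I(X:MN) \leq (S(M)+S(N)) - (S(M)-S(N)) = 2S(N)$, which is the claimed bound. The classical analogue would in fact be strictly stronger: since $H(AB) \geq H(A)$ unconditionally, the same argument yields $I(X:MN) \leq S(N)$, with the missing factor of two reflecting the extra correlations that purifications can carry in the quantum setting.

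I do not foresee a real obstacle: the only choice to make is to reach for Araki--Lieb rather than, say, strong subadditivity or the chain-rule reduction $I(X:MN) = I(X:N|M)$, and once that choice is made the derivation is a brief accounting of standard entropy inequalities with the constant-marginal consequence of $I(X:M)=0$ inserted at the right moment.
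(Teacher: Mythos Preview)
Your proof is correct. Both your argument and the paper's rely on the Araki--Lieb triangle inequality, but the decompositions differ. The paper uses the identity $I(X:MN) = I(X:M) + I(XM:N) - I(M:N)$, drops $I(X:M)=0$ and $I(M:N)\geq 0$, and then applies the consequence $I(A:B)\leq 2\min\{S(A),S(B)\}$ of Araki--Lieb to bound $I(XM:N)\leq 2S(N)$ in one stroke. Your approach instead expands $I(X:MN)=S(MN)-S(MN|X)$, uses subadditivity on the first term, and applies Araki--Lieb pointwise to the conditional states $\rho_{MN}^x$ for the second. The paper's route is slightly more economical and, notably, never uses that $X$ is classical: it goes through for an arbitrary tripartite quantum state satisfying $I(X:M)=0$. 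Your route explicitly exploits the classical structure (to make sense of $\rho_M^x$ and to deduce $\rho_M^x=\rho_M$ from $I(X:M)=0$), but in exchange it makes transparent why the purely classical analogue sharpens to $I(X:MN)\leq S(N)$.
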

Let $\cX$ be a finite set and let $\cS$ be the set of all quantum
states.  A classical-quantum ($\cq$) channel $E$ is a map from
$\cX$ to $\cS$.  All the channels we consider will be $\cq$ channels and we will avoid mentioning $\cq$ explicitly from now on.
For a probability distribution $\mu$ over $\cX$, let $E_\mu$ be the bipartite state $\E_{x \leftarrow \mu}[\ketbra{x}
\otimes E(x)]$. Let $I(E_\mu)$ be the mutual
information  between the two systems in $E_\mu$.  The channel capacity of such a channel is defined as follows.
\begin{definition}[Channel capacity]
Channel capacity of the channel $E: \cX \mapsto \cS$ is
defined as $C(E) \defeq \max_{\mu} I(E_\mu)$.
\end{definition}
A {\em derived channel} is defined as follows.
\begin{definition}[Derived channel]
Let $\cX$ and $\cY$ be finite sets. Let $E: \cX \times \cY \rightarrow \cS$ be a channel.  For a collection $\{\mu_x : x \in \cX \}$, where each
$\mu_x$ is a probability distribution on $\cY$, let $F: \cX \rightarrow \cS$ be a channel given by $ F(x) \defeq
\E_{y \leftarrow \mu_x}[E(x,y)]$. Such  a channel $F$ is referred to as an
$E$-derived channel on $\cX$. Similarly we can define $E$-derived
channels on $\cY$ using collections of probability distributions on $\cX$.
\end{definition}
We will need the following  result from Jain~\cite{Jain05}.
\begin{fact}[Super-additivity~\cite{Jain05}] \label{fact:supadd}
Let $k$ be a positive integer. Let $\cX_1, \cX_2, \ldots, \cX_k$ be finite sets. Let  $E:
\cX_1 \times \cX_2 \ldots \times \cX_k \rightarrow \cS$ be a channel. For $i \in [k]$, let $\cC_{i}$ be the set of
all $E$-derived channels on $\cX_i$. Then,
$$C(E) \quad \geq \quad \sum_{i=1}^k \min_{F_i \in \cC_i} C(F_i) \enspace .   $$
\end{fact}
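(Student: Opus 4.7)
The plan is to exhibit a product input distribution $\mu^{*} = \prod_{i=1}^{k} \nu_i^{*}$ on $\cX_1 \times \cdots \times \cX_k$ under which the mutual information with the output is already at least $\sum_i \min_{F_i \in \cC_i} C(F_i)$; since $C(E) \geq I(E_{\mu^{*}})$ by the definition of channel capacity, this will imply the claim.

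First I would establish that for any product distribution $\mu = \prod_i \nu_i$,
\[ I(X_1 \cdots X_k : M)_\mu \;\geq\; \sum_{i=1}^k I(X_i : M)_\mu. \]
This follows from the chain rule $I(X_1 \cdots X_k : M) = \sum_i I(X_i : M \mid X_{<i})$ combined with $I(X_i : M \mid X_{<i}) = I(X_i : M X_{<i}) \geq I(X_i : M)$, where the equality uses independence of $X_i$ from $X_{<i}$ under the product law and the inequality is data processing. Next I would observe that under product $\mu$ one has $I(X_i : M)_\mu = I((F_i^{\nu_{-i}})_{\nu_i})$, where $F_i^{\nu_{-i}}(x_i) \defeq \E_{x_{-i} \sim \prod_{j \neq i} \nu_j}[E(x_i, x_{-i})]$. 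The channel $F_i^{\nu_{-i}}$ is $E$-derived on $\cX_i$ (with $\mu_{x_i} = \prod_{j \neq i} \nu_j$, taken independent of $x_i$), hence lies in $\cC_i$.

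The heart of the argument is then to choose the $\nu_i^{*}$'s so that each $\nu_i^{*}$ is a capacity-achieving input distribution for $F_i^{\nu_{-i}^{*}}$; that is, $I((F_i^{\nu_{-i}^{*}})_{\nu_i^{*}}) = C(F_i^{\nu_{-i}^{*}})$. Since $F_i^{\nu_{-i}^{*}} \in \cC_i$, this gives $I(X_i : M)_{\mu^{*}} \geq \min_{F_i \in \cC_i} C(F_i)$ for each $i$, and chaining with the first step yields
\[ C(E) \;\geq\; I(X_1 \cdots X_k : M)_{\mu^{*}} \;\geq\; \sum_i I(X_i : M)_{\mu^{*}} \;=\; \sum_i C(F_i^{\nu_{-i}^{*}}) \;\geq\; \sum_i \min_{F_i \in \cC_i} C(F_i). \]

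The main obstacle is producing the tuple $\nu^{*}$. This is a Nash-equilibrium existence problem for the $k$-player game in which player $i$ picks $\nu_i \in \Delta(\cX_i)$ with payoff $I((F_i^{\nu_{-i}})_{\nu_i})$. The best-response correspondence $\nu_{-i} \mapsto \arg\max_{\nu_i} I((F_i^{\nu_{-i}})_{\nu_i})$ is non-empty and convex-valued, since mutual information is concave in the input distribution for a fixed channel, and it is upper hemi-continuous by Berge's theorem of the maximum, since the map $\nu_{-i} \mapsto F_i^{\nu_{-i}}$ is linear and $(\nu_i, F) \mapsto I(F_{\nu_i})$ is jointly continuous. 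The joint strategy space $\prod_i \Delta(\cX_i)$ is a compact convex subset of finite-dimensional Euclidean space, so Kakutani's fixed-point theorem delivers the required $\nu^{*}$. The classical case proceeds verbatim, with probability distributions in place of quantum states and Shannon entropy in place of von Neumann entropy.
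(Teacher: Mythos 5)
Your proof is correct, but it takes a genuinely different route from the paper's. The paper proves the $k=2$ case (which it observes implies general $k$ by a straightforward induction on how derived channels compose) by constructing a \emph{non-product} joint input distribution: it first fixes, for each $x\in\cX_1$, a capacity-achieving distribution $\mu_x$ for the derived channel $E^x$ on $\cX_2$, then places a capacity-achieving distribution $\mu_\cX$ on $\cX_1$ for the resulting derived channel $E^\cX(x)=\E_{y\leftarrow\mu_x}[E(x,y)]$, and finally applies the chain rule $I(XY:M)=I(X:M)+\E_x[I(Y:M\mid X=x)]$ to split $I(E_\mu)$ exactly into $C(E^\cX)+\E_x[C(E^x)]$, each summand being the capacity of a member of the relevant $\cC_i$. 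This crucially exploits the fact that the definition of ``derived channel'' permits the inner distribution $\mu_x$ to depend on $x$, so the conditional sampling step produces a legitimate derived channel at each stage. Your argument instead insists on a \emph{product} input $\mu^*=\prod_i\nu_i^*$, which buys you the clean decomposition $I(E_{\mu^*})\geq\sum_i I(X_i:M)$ but costs you a nontrivial existence argument: the $\nu_i^*$'s must be mutually best responses, and you invoke Kakutani (via concavity of mutual information in the input distribution and Berge's theorem for upper hemi-continuity of the best-response map). Both proofs are valid; yours additionally establishes the structural refinement that a product input already achieves the lower bound, but the paper's chain-rule argument is more elementary, avoids the fixed-point machinery, and gives the bound directly without needing Nash equilibrium. (It is amusing that the paper does use Kakutani, but in the proof of Fact~\ref{fact:SleqC}, not here.)
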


We will also use the following result from Jain~\cite{Jain06}. An alternate proof of this fact for the special case of classical channels, can be found in~\cite{HarshaJMR07}.
\begin{fact}[\cite{Jain06}]
\label{fact:SleqC}
Let $E : \cX \rightarrow \cS$ be a channel. There exists a
quantum state $\tau$ such that $$\forall x \in \cX, \quad S(E(x) \| \tau ) \quad \leq \quad C(E) \enspace .$$
\end{fact}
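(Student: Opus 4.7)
The plan is to derive the statement from a minimax argument applied to the natural two-variable objective
$$f(\mu, \tau) \;\defeq\; \E_{x \leftarrow \mu}\bigl[S(E(x) \| \tau)\bigr],$$
where $\mu$ ranges over probability distributions on the finite set $\cX$ and $\tau$ over quantum states on the output Hilbert space. Writing $\sigma_\mu \defeq \E_{x \leftarrow \mu}[E(x)]$, the first step would be to verify the ``Pythagorean'' identity
$$f(\mu,\tau) \;=\; I(E_\mu) \,+\, S(\sigma_\mu \| \tau),$$
which follows by expanding relative entropies and using the alternate characterization of mutual information in Eq.~(\ref{eq:altchar}) together with the linearity $\Tr \sigma_\mu \log \tau = \E_{x \leftarrow \mu}[\Tr E(x) \log \tau]$. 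Since $S(\sigma_\mu \| \tau) \geq 0$ with equality at $\tau = \sigma_\mu$, it follows that $\min_\tau f(\mu,\tau) = I(E_\mu)$, attained at the marginal $\sigma_\mu$.

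Next, because the maximum of a linear functional on the probability simplex is attained at a vertex, $\max_{x \in \cX} S(E(x)\|\tau) = \max_\mu f(\mu,\tau)$, so the claim reduces to
$$\min_\tau \max_\mu f(\mu,\tau) \;\leq\; C(E).$$
I would establish this by invoking Sion's minimax theorem. The probability simplex over $\cX$ is compact and convex; the set of density operators on the output system is compact and convex; and $f$ is linear (hence concave and continuous) in $\mu$ and convex in $\tau$ by joint convexity of relative entropy. Sion's theorem therefore yields
$$\min_\tau \max_\mu f(\mu,\tau) \;=\; \max_\mu \min_\tau f(\mu,\tau) \;=\; \max_\mu I(E_\mu) \;=\; C(E),$$
and any $\tau^\star$ attaining the outer minimum witnesses $S(E(x)\|\tau^\star) \leq C(E)$ for every $x \in \cX$.

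The main delicate point I expect to handle is verifying the semicontinuity hypothesis of Sion on the $\tau$-side: the map $\tau \mapsto S(\rho \| \tau)$ is only lower semicontinuous in general and equals $+\infty$ whenever $\mathrm{supp}(\rho) \not\subseteq \mathrm{supp}(\tau)$. A clean way to sidestep this is to restrict the $\tau$-domain in advance to states supported on the span of $\{E(x) : x \in \cX\}$, a finite-dimensional ambient space on which relative entropy is continuous in the interior; since $\sigma_\mu$ for any optimal $\mu$ already lives there, no minimizer is lost. Modulo this standard but slightly fiddly point, the proof is just the identity of the first step combined with Sion's theorem.
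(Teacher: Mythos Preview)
Your argument is correct and uses the same minimax idea as the paper, but the execution differs in a way worth noting. You optimize over all density operators $\tau$ and invoke Sion, which forces you to confront the lower-semicontinuity/support issue you flag at the end. The paper sidesteps this entirely by letting the second player also choose a distribution on $\cX$: the objective is $u(\lambda,\mu)=\E_{x\leftarrow\lambda}[S(E(x)\|\rho_\mu)]$ with $\rho_\mu=\E_{x\leftarrow\mu}[E(x)]$, so both domains are the same finite simplex and the comparison state $\tau=\rho_\mu$ automatically has the right support. A Kakutani-type minimax (Fact~\ref{fact:minimax} in the paper) then yields $\min_\mu\max_\lambda u=\max_\lambda\min_\mu u\le\max_\lambda u(\lambda,\lambda)=C(E)$, and one takes $\tau=\rho_{\tilde\mu}$ for the optimizing $\tilde\mu$. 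The paper never invokes your Pythagorean identity, needing only the trivial bound $\min_\mu u(\lambda,\mu)\le u(\lambda,\lambda)$; conversely, your identity is exactly what shows that restricting $\tau$ to the convex hull of the $E(x)$ loses nothing, and it upgrades the conclusion to the equality $\min_\tau\max_x S(E(x)\|\tau)=C(E)$.
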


The above fact allows worst case message compression when the channel
capacity is small: given $\tau$ we can reconstruct {\it any} $E(x)$ using the following compression result implicit in~\cite{JainRS05} (stated slightly differently there).

\begin{fact}[Compression~\cite{JainRS05}]
\label{fact:compress} Let $\alice$ and $\referee$ share several copies of a bi-partite pure state $\ket{\phi}$ between  them, such that the marginal of $\ket{\phi}$ on $\referee$'s part is $\tau$.
For any state $\rho$ and for any $\delta > 0$, $\alice$ can measure her part of the states and send $O(\frac{1}{\delta^3} \cdot S(\rho \| \tau))$ bits to $\referee$, enabling $\referee$ to pick state $\rho'$ with him such that $\tn{\rho - \rho'} \leq \delta$.
\end{fact}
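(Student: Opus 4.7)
The plan is to combine the quantum substate theorem of Jain, Radhakrishnan and Sen with a rejection-sampling strategy across the shared copies of $\ket{\phi}$. The guiding intuition: when $s := S(\rho\|\tau)$ is small, $\tau$ must carry a non-negligible component in the ``direction'' of $\rho$, and a suitable POVM on $\alice$'s half of a purification of $\tau$ will steer $\referee$'s reduced state onto something close to $\rho$ with probability roughly $2^{-O(s)}$; transmitting the index of a successful trial then costs about $s$ bits, with some polynomial-in-$1/\delta$ overhead.

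First, I would invoke a suitable quantitative form of the substate theorem: for any $\delta > 0$ there exist states $\tilde{\rho}, \sigma$ and a probability $p \ge 2^{-O(s/\delta^3)}$ with $\tau = p\,\tilde{\rho} + (1-p)\,\sigma$ and $\tn{\rho - \tilde{\rho}} \le \delta/2$. Since $\ket{\phi}_{AR}$ purifies $\tau$, refining this into a convex decomposition into pure states, invoking Gisin--Hughston--Jozsa--Wootters to realize that finer decomposition by a rank-one measurement on $\alice$'s side, and coarse-graining the outcomes back into two groups yields a two-outcome POVM $\{M, I-M\}$ on $\alice$'s register such that outcome ``$M$'' occurs with probability $p$ and, conditioned on it, $\referee$'s reduced state is exactly $\tilde{\rho}$.

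$\alice$ then applies this POVM independently on her half of each of $N := \lceil \ln(4/\delta)/p \rceil$ shared copies of $\ket{\phi}$. By independence the probability that every trial fails is at most $(1-p)^N \le \delta/4$. Let $i^*$ denote the index of the first successful trial, or $\bot$ if all fail. $\alice$ transmits $i^* \in \{1,\ldots,N\} \cup \{\bot\}$ using $\lceil\log(N+1)\rceil = O(s/\delta^3)$ bits (absorbing a lower-order $O(\log\log(1/\delta))$ term). $\referee$ outputs his half of the $i^*$-th copy, or any fixed state if $i^* = \bot$. Since the $N$ copies are independent purifications, on the success event $\referee$'s register is exactly $\tilde{\rho}$, so a triangle-inequality argument yields $\tn{\rho - \rho'} \le \tn{\rho - \tilde\rho} + 2\Pr[i^* = \bot] \le \delta/2 + \delta/2 = \delta$.

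The only nontrivial ingredient, and hence the main obstacle, is the substate theorem with the stated polynomial-in-$1/\delta$ dependence inside $\log(1/p)$; we simply quote this from \cite{JainRS05}. Everything else --- the rejection-sampling scheme, the GHJW-based reduction of a convex decomposition of $\tau$ to a measurement on $\alice$'s side, the setting of $N$, and the counting of transmitted bits --- is then routine.
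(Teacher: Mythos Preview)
Your proposal is correct and follows essentially the same route as the paper: both invoke the Jain--Radhakrishnan--Sen substate theorem to exhibit $\tau$ as a mixture with a $2^{-O(s/\mathrm{poly}(1/\delta))}$-weight component $\delta$-close to $\rho$, steer $\referee$'s marginal onto that component by a measurement on $\alice$'s side of a purification, and transmit the index of the first successful trial. The only cosmetic difference is that the paper realizes the steering via the ``local-transition'' fact (unitary equivalence of purifications) together with an explicit flag qubit supplied by the purification form of the substate theorem, whereas you go through the mixed-state form and GHJW; these are two phrasings of the same steering argument.
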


We explicitly state the classical version of the above result for clarity.
\begin{fact}[Compression~\cite{JainRS05}]
\label{fact:compressclass} Let $\alice$ and $\referee$ share public coins distributed according to $Q$.
For any distribution  $P$ and for any $\delta > 0$, $\alice$ can send $O(\frac{1}{\delta^2} \cdot S(P \| Q))$ bits to $\referee$, at the end of which $\referee$ can sample from a distribution $P'$ such that $\|P - P'\| \leq \delta$.
\end{fact}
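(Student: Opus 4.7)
The plan is to realize a truncated rejection-sampling procedure on the shared randomness. Using $Q$, $\alice$ and $\referee$ jointly generate a long sequence $(s_1,u_1),(s_2,u_2),\ldots$, where each $s_i$ is an iid sample from $Q$ and each $u_i$ is an independent $U[0,1]$ variable (both produced from the public coins). Choose a threshold $\lambda$ and a trial bound $T$ to be calibrated. For each $i$, set the truncated likelihood ratio $r_i := \min\{P(s_i)/(\lambda Q(s_i)),\,1\}$, and let $i^*$ be the first index $i \le T$ with $u_i \le r_i$ (if it exists). $\alice$ sends $i^*$ to $\referee$ (using a single extra bit as a failure flag if no such $i^*$ is found), and $\referee$ outputs $s_{i^*}$. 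The total communication is $\log T + 1$ bits.

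The analysis splits into a distributional and a tail piece. Conditioned on acceptance at any given trial, the distribution of the accepted sample is the truncated law $\tilde P(s) \propto \min\{P(s),\lambda Q(s)\}$. A short computation shows that $\|\tilde P - P\|$ is controlled by the ``heavy tail'' mass $p := \Pr_{s \sim P}[P(s) > \lambda Q(s)]$, and Markov's inequality applied to the positive part of $\log(P/Q)$ under $P$ yields $p \le O\bigl(S(P\|Q)/\log\lambda\bigr)$. For the tail piece, the per-trial acceptance probability equals $\frac{1}{\lambda}\sum_s \min\{P(s),\lambda Q(s)\} = \Theta\bigl((1-p)/\lambda\bigr)$, so a standard geometric-tail bound shows that $T$ trials suffice for acceptance with probability at least $1-\delta/2$ as long as $\log T = \Omega(\log\lambda + \log(1/\delta))$. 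Setting $\log\lambda = \Theta(S(P\|Q)/\delta^2)$ drives both error sources below $\delta/2$, giving $\|P-P'\| \le \delta$ with communication $\log T + 1 = O(S(P\|Q)/\delta^2)$.

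The main technical obstacle is the calibration balancing the two error contributions. Pushing $\lambda$ high suppresses the distributional error $p$ coming from truncation (the Markov bound needs $\log\lambda$ to dominate $S(P\|Q)/\delta$), but inflates the expected waiting time and hence $\log T$. Because both effects scale inversely with $\delta$, the compound trade-off is what produces the $1/\delta^2$ overhead in the bound rather than a naive $1/\delta$. Once this calibration is made explicit, the rest of the argument is a routine concentration and total-variation computation, and the scheme can be seen as the classical specialization of the Fact~\ref{fact:compress} compression already invoked in the paper.
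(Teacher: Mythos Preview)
Your approach and the paper's are the same mechanism presented differently. The paper invokes the classical substate theorem (Fact~\ref{fact:substateclass}) as a black box to get a decomposition $Q=\alpha\tilde P+(1-\alpha)R$ with $\|\tilde P-P\|\le 2/r$ and $\alpha\approx 2^{-rk}$, $k=S(P\|Q)+1$, and then has $\alice$ flag the first shared sample that falls in the $\tilde P$ component (Fact~\ref{fact:class}). Your truncated rejection sampling is exactly this made explicit: the accepted law $\tilde P(s)\propto\min\{P(s),\lambda Q(s)\}$ is the substate theorem's $\tilde P$, with $\lambda$ playing the role of $2^{rk}$, and your Markov bound on the heavy set reproduces the substate theorem's distance guarantee. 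Your route is the more elementary of the two since it does not need the substate theorem as an external lemma; the paper's route is shorter on the page because it outsources that computation.

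Two minor points. First, your calibration narrative is off: once $\log\lambda=\Theta(S(P\|Q)/\delta)$ you already have $p=O(\delta)$, and the geometric tail only contributes an \emph{additive} $\log(1/\delta)$ to $\log T$, not a multiplicative $1/\delta$. So your scheme in fact gives $\log T=O(S(P\|Q)/\delta+\log(1/\delta))$, stronger than the stated $O(S(P\|Q)/\delta^2)$; your choice $\log\lambda=\Theta(S(P\|Q)/\delta^2)$ is correct but wasteful (the paper's $T=2^{2rk/\delta}$ is the same overshoot). Second, for Markov on the positive part of $\log(P/Q)$ you need $\E_P[(\log P/Q)_+]\le S(P\|Q)+O(1)$; this holds because the negative part satisfies $\sum_{P<Q}P\log(Q/P)\le \sum_{P<Q}(Q-P)/\ln 2\le 1/\ln 2$, a routine step worth spelling out.
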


We will use the following relation between relative entropy and trace distance from \cite{KlauckNTZ07}.
\begin{fact}
\label{fact:entropytrace} For density matrices $\rho,\sigma:$
\[\tn{\rho-\sigma}\leq \sqrt2S(\rho\|\sigma)^{1/2}.\]
\end{fact}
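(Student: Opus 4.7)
The plan is to reduce the quantum Pinsker inequality to its classical analogue by picking the worst-case distinguishing measurement, and then to handle the classical two-outcome case by a direct single-variable calculation.

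First, let $\Pi_+$ be the projector onto the strictly positive eigenspace of the Hermitian operator $\rho-\sigma$, and let $\Pi_-\defeq I-\Pi_+$. Set $p\defeq \Tr(\Pi_+\rho)$ and $q\defeq \Tr(\Pi_+\sigma)$. A direct spectral computation yields the Helstrom identity
\[\tn{\rho-\sigma}\;=\;\Tr\bigl((\rho-\sigma)(\Pi_+-\Pi_-)\bigr)\;=\;2(p-q),\]
so bounding $\tn{\rho-\sigma}$ reduces to bounding $2(p-q)$ in terms of $S(\rho\|\sigma)$.

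Next, I would invoke monotonicity of quantum relative entropy under the two-outcome measurement channel $\Phi(\xi)\defeq \Tr(\Pi_+\xi)\ketbra{0}+\Tr(\Pi_-\xi)\ketbra{1}$ (the Lindblad--Uhlmann theorem). Since $\Phi(\rho)$ and $\Phi(\sigma)$ are the classical Bernoulli distributions $(p,1-p)$ and $(q,1-q)$, this gives
\[S(\rho\|\sigma)\;\geq\; S\bigl(\Phi(\rho)\,\|\,\Phi(\sigma)\bigr)\;=\; p\log\frac{p}{q}+(1-p)\log\frac{1-p}{1-q}.\]

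Finally, I would establish the scalar Bernoulli--Pinsker inequality
\[2(p-q)^2\;\leq\; p\log\frac{p}{q}+(1-p)\log\frac{1-p}{1-q}\qquad(p,q\in[0,1])\]
(with the constant tuned to the logarithm base) by letting $g(p)$ denote the right-hand side minus the left-hand side, observing $g(q)=0$ and $g'(q)=0$, and then confirming $g''(p)\geq 0$ on $(0,1)$ via $p(1-p)\leq 1/4$. Chaining the three steps yields $\tn{\rho-\sigma}^2=4(p-q)^2\leq 2S(\rho\|\sigma)$, which rearranges to the claim.

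The main obstacle is the middle step: unlike its classical sibling, which follows from the log-sum inequality in a line, monotonicity of quantum relative entropy under a general CPTP map (even one as benign as the measurement channel $\Phi$) is nontrivial and would be cited from Lindblad and Uhlmann rather than reproved from scratch. Everything else is elementary: the Helstrom identity is a one-step spectral calculation, and the Bernoulli Pinsker bound is a second-derivative check in a single real variable.
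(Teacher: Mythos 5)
Your proof is correct, and it is the standard argument for quantum Pinsker: pass to classical distributions via the optimal two-outcome (Helstrom) measurement, invoke data processing (Lindblad--Uhlmann monotonicity of relative entropy) to drop to the Bernoulli case, and finish with the scalar Pinsker bound by a second-derivative check. Note, however, that the paper does not prove Fact~\ref{fact:entropytrace} at all; it is cited as a black box from \cite{KlauckNTZ07}, so there is no in-paper proof to compare against.

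Two very minor remarks. First, the measurement $\Pi_+$ is the \emph{optimal} distinguishing measurement (the one that saturates the Helstrom bound and achieves the trace distance), not the ``worst-case'' one; the phrasing is off but the computation $\Tr\bigl((\rho-\sigma)(\Pi_+-\Pi_-)\bigr)=\tn{\rho-\sigma}=2(p-q)$ is exactly right. Second, your scalar inequality $2(p-q)^2\leq p\log\frac{p}{q}+(1-p)\log\frac{1-p}{1-q}$ is the tight form for the natural logarithm; the paper's $\log$ is base~$2$ (needed so that $S(A)\leq l$ for an $l$-qubit system), under which the relative entropy is a factor $1/\ln 2>1$ larger, so the stated bound $\tn{\rho-\sigma}\leq\sqrt{2}\,S(\rho\|\sigma)^{1/2}$ holds a fortiori --- you flag this base dependence, which is the right instinct. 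The second-derivative check $g''(p)=\frac{1}{p(1-p)}-4\geq 0$, together with $g(q)=g'(q)=0$, indeed makes $q$ a global minimizer of the convex $g$, closing the argument.
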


Finally, we need the quantum random access code bound due to Nayak \cite{Nayak99} (here also
stated for the case where entanglement is allowed).

\begin{fact}
\label{fact:qrac} Assume $\alice$ receives a uniformly random string $x\in\{0,1\}^n$ and $\bob$ a uniformly random
index $i\in\{1,\ldots,n\}$. $\alice$ and $\bob$ may share entanglement, and $\alice$ sends one message to $\bob$, which allows him to decode $x_i$ with probability $1-\epsilon$ (averaged over the inputs). Then $\alice$'s message needs to have $(1-H(\epsilon))n/2$ qubits, where $H$ denotes the binary entropy function. Without entanglement the bound is $(1-H(\epsilon))n$.
\end{fact}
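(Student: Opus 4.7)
The plan is to lower-bound the classical-quantum mutual information $I(X:M)$ between Alice's uniformly random input $X \in \{0,1\}^n$ and Bob's final quantum register, and then translate this into a qubit lower bound via Holevo's theorem, accounting separately for the cases with and without shared entanglement.

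For the information lower bound, note that for each index $i \in [n]$ Bob performs a measurement on his register that outputs a guess $\hat X_i$ with average error $\epsilon_i$, where $\tfrac{1}{n}\sum_i \epsilon_i \leq \epsilon$. Fano's inequality gives $H(X_i \mid \hat X_i) \leq H(\epsilon_i)$, and data-processing (applied to the classical channel from $M$ to $\hat X_i$) gives $I(X_i : M) \geq I(X_i : \hat X_i) \geq 1 - H(\epsilon_i)$. Since $X_1,\ldots,X_n$ are mutually independent under the uniform distribution, superadditivity of classical-quantum mutual information for independent classical inputs yields $I(X:M) \geq \sum_i I(X_i : M) \geq n - \sum_i H(\epsilon_i) \geq n\,(1 - H(\epsilon))$, where the last step uses concavity of the binary entropy $H$.

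For the qubit lower bound, consider first the no-entanglement case: Alice's $m$-qubit message $M$ is all Bob holds, and Holevo's theorem gives $I(X:M) \leq S(M) \leq m$, hence $m \geq (1-H(\epsilon))\,n$. In the entanglement-assisted case, let $B$ denote Bob's half of the pre-shared entangled state; Bob's final register is $MB$. Since $B$ was prepared before Alice received her input, $I(X:B) = 0$, so Fact~\ref{fact:lowinfent} gives $I(X:MB) \leq 2\,S(M) \leq 2m$. Combining with the information lower bound yields $m \geq (1-H(\epsilon))\,n/2$.

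The main technical obstacle is the superadditivity step, i.e., establishing $I(X_1\ldots X_n : M) \geq \sum_i I(X_i : M)$ for a classical-quantum state with independent classical marginals. This reduces to the inequality $H(X_1\ldots X_n \mid M) \leq \sum_i H(X_i \mid M)$ in the classical-quantum setting, which follows from subadditivity of von Neumann entropy applied to the joint state (using that the classical $X_i$'s each sit on their own register and the conditional entropy is well-behaved in the CQ case). The factor-of-two gap between the two bounds mirrors the superdense-coding barrier — it is exactly the slack that Fact~\ref{fact:lowinfent} introduces when converting a bound on the size of the new message into a bound on total mutual information in the presence of pre-existing zero-information correlations — and this loss is known to be tight.
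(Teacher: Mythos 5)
The paper does not prove Fact~\ref{fact:qrac}; it cites Nayak~\cite{Nayak99} and states the entanglement-assisted variant without a derivation. So there is no in-paper proof to compare against, and the question is simply whether your argument is sound. It is, and it is in fact the standard route: your derivation is essentially Nayak's original argument (Fano plus data processing to lower-bound each $I(X_i:M)$, superadditivity over independent bits to lower-bound $I(X:M)$, Holevo to cap it by $S(M)\leq m$), with the entanglement-assisted factor-of-two coming exactly from the paper's own Fact~\ref{fact:lowinfent} applied with $I(X:B)=0$. A few small imprecisions worth tightening: the superadditivity step $I(X:M)\geq\sum_i I(X_i:M)$ does not follow from plain subadditivity of von Neumann entropy; it needs the chain rule together with ``conditioning reduces entropy'' (equivalently strong subadditivity), and the independence of the $X_i$ is used to get $H(X)=\sum_i H(X_i)$ (this is precisely the pattern the paper itself invokes via Fact~\ref{fact:chaininf} in the proof of its $\sQ^{\|,\ent}$ lower bound). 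Also, the channel $M\to\hat X_i$ is a quantum-to-classical measurement, not a classical channel, though quantum data processing still applies. Neither of these affects correctness.
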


The above result is essentially a lower bound in the quantum one-way communication complexity model for a function known as the Index function. Alternatively we will refer to $\alice$'s message as the random access code of the strings $x$.
\section{Direct sum}
\label{sec:direct}

We restate and subsequently prove our main result about Direct sum.
\begin{theorem}[Direct sum]
Let $f \subseteq \cX \times \cY \times \cZ$ be a relation. Let $k$ be a positive integer. Let $\epsilon, \delta \in (0,1/4)$. Then,
\begin{enumerate}
\item \label{it:quant} $\sQ^{\|, \tpriv}_{\epsilon} (f^{\otimes k}) \quad \geq \quad \Omega(k \cdot \delta^3 \cdot \sQ^{\|, \tpriv}_{\epsilon + \delta} (f)) \enspace .$
\item \label{it:class} $\sR^{\|, \tpriv}_{\epsilon} (f^{\otimes k}) \quad \geq \quad \Omega(k \cdot \delta^2 \cdot \sR^{\|, \tpriv}_{\epsilon + \delta}(f)) \enspace .$
\end{enumerate}
\end{theorem}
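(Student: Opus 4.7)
The plan is to simulate a single-instance protocol for $f$ by compressing a protocol for $f^{\otimes k}$ at one well-chosen coordinate. Fix an $\epsilon$-error $\tpriv$-protocol $\cP$ for $f^{\otimes k}$ in which $\alice$ sends $c_A$ qubits and $\bob$ sends $c_B$ qubits, and set $c = c_A + c_B$. I would view $\alice$'s side of $\cP$ as a $\cq$ channel $E_A : \cX^k \to \cS$, where $E_A(x)$ denotes the joint state on $\referee$'s side consisting of his half of the $\alice$--$\referee$ entanglement together with $\alice$'s message, when $\alice$'s input is $x$; define $E_B : \cY^k \to \cS$ symmetrically. Since $\referee$'s half of the $\alice$--$\referee$ entanglement is independent of $\alice$'s input before $\alice$ acts, Fact~\ref{fact:lowinfent} (taking $M$ to be $\referee$'s entanglement-half and $N$ to be $\alice$'s message) yields $C(E_A) \leq 2 c_A$, and symmetrically $C(E_B) \leq 2 c_B$.

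Super-additivity (Fact~\ref{fact:supadd}) supplies, for each $i \in [k]$, derived channels $F_i^A : \cX \to \cS$ and $F_i^B : \cY \to \cS$ — realized by families of distributions $\{\mu_x^A\}_{x \in \cX}$ on $\cX^{k-1}$ and $\{\mu_y^B\}_{y \in \cY}$ on $\cY^{k-1}$ — such that $\sum_i C(F_i^A) \leq 2 c_A$ and $\sum_i C(F_i^B) \leq 2 c_B$. A simple averaging argument then picks an index $i^* \in [k]$ with $C(F_{i^*}^A) + C(F_{i^*}^B) = O(c/k)$. Fact~\ref{fact:SleqC} provides states $\tau_A, \tau_B$ on $\referee$'s side satisfying $S(F_{i^*}^A(x) \| \tau_A) \leq C(F_{i^*}^A)$ for every $x$ and $S(F_{i^*}^B(y) \| \tau_B) \leq C(F_{i^*}^B)$ for every $y$.

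The simulated protocol $\cP'$ on single-instance input $(x, y)$ is then: $\alice$ and $\referee$ share purifications of $\tau_A$, $\bob$ and $\referee$ share purifications of $\tau_B$; $\alice$ invokes Fact~\ref{fact:compress} with target state $F_{i^*}^A(x)$ and parameter $\delta/2$, transmitting $O(C(F_{i^*}^A)/\delta^3)$ qubits so that $\referee$ reconstructs a state within trace distance $\delta/2$ of $F_{i^*}^A(x)$; $\bob$ does the symmetric thing; $\referee$ applies $\cP$'s measurement to the pair of reconstructed states and outputs the $i^*$-th coordinate. For correctness, note that $F_{i^*}^A(x) \otimes F_{i^*}^B(y)$ equals the expectation of $E_A(x_A) \otimes E_B(y_B)$ when $(x_A, y_B)$ is drawn from the product distribution that fixes the $i^*$-th coordinate to $(x, y)$ and samples the remaining coordinates from $\mu_x^A \otimes \mu_y^B$; averaging $\cP$'s per-input $\epsilon$-error guarantee over this distribution shows that applying $\cP$'s measurement to this expected state outputs a $z_{i^*}$ with $(x, y, z_{i^*}) \in f$ with probability at least $1 - \epsilon$. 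A triangle inequality gives trace distance at most $\delta$ between the reconstructed joint state and $F_{i^*}^A(x) \otimes F_{i^*}^B(y)$, and monotonicity of trace distance under measurement transfers this to $\cP'$'s output distribution. Hence $\cP'$ has error at most $\epsilon + \delta$ and communication $O(c/(k\delta^3))$, which rearranges to item~\ref{it:quant}. Item~\ref{it:class} is proved by the identical argument with probability distributions in place of quantum states and Fact~\ref{fact:compressclass} in place of Fact~\ref{fact:compress}, whose $1/\delta^2$ compression cost accounts for the stronger $\delta^2$ factor.

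The main obstacle I expect is pinning down the channel output correctly: it must include $\referee$'s pre-shared entanglement-half, so that the target of the compression step in Fact~\ref{fact:compress} is a bona fide state on $\referee$'s side (reconstructible from fresh $\tau_A$-entanglement), yet the capacity $C(E_A)$ must still be bounded by the message length alone. Fact~\ref{fact:lowinfent} is exactly the bridge between these requirements, exploiting that $\referee$'s half of the entanglement is a priori uncorrelated with the input. A secondary subtlety is that the distribution families $\mu_x^A, \mu_y^B$ witnessing super-additivity must depend only on the relevant player's own $i^*$-th coordinate input, which is automatic here because $E_A$ and $E_B$ depend on $\alice$'s and $\bob$'s inputs separately.
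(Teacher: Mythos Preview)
Your proposal is correct and follows essentially the same approach as the paper: define the channel output to be $\referee$'s entanglement-half together with the message, bound its capacity via Fact~\ref{fact:lowinfent}, use super-additivity plus averaging to find a good coordinate, and then compress via Facts~\ref{fact:SleqC} and~\ref{fact:compress}. The only cosmetic differences are that you track $c_A,c_B$ separately (the paper just uses the total $c$) and you feed $\delta/2$ into the compression to land exactly at error $\epsilon+\delta$ (the paper lands at $\epsilon+2\delta$ and absorbs the $2$).
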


\begin{proof}
We state the proof of part~\ref{it:quant} above. The proof of
part~\ref{it:class} follows very similarly by using the classical
versions of the facts used.

Let $c \defeq \sQ^{\|, \tpriv}_\epsilon (f^{\otimes k})$. Let $\cP$ be an $\smp$ protocol for $f^{\otimes k}$ with communication $c$ and its error on all inputs being at most $\epsilon$.
Let $\rho_x$ be the combined state of the qubits received by $\referee$  from $\alice$, when  $\alice$'s input is $x$, and $\referee$'s part of the shared entangled state with $\alice$. Similarly let
 $\sigma_y$ be the combined state of the qubits received by $\referee$  from $\bob$, when $\bob$'s input is $y$, and $\referee$'s part of the shared entangled state with $\bob$.  Let $\cS$ be the set of all quantum states.
Let $A:  \cX \rightarrow \cS$ be a channel given by $A(x) \defeq
\rho_x$ and let $B:  \cY \rightarrow \cS$ be a channel given by $B(y)
\defeq \sigma_y$.  Using Fact~\ref{fact:lowinfent} and the fact that
for an $l$ qubit quantum system $M$, $S(M) \leq l$, it can be seen
that $C(A) \leq 2c$ and $C(B) \leq 2c$. From Fact~\ref{fact:supadd} and
using Markov's inequality, we have that there exists a coordinate $i
\in [k]$ and an $A$-derived channel $A_i$ on the input on the $i$-th
coordinate and a $B$-derived channel $B_i$ on the input on the $i$-th
coordinate, such that $C(A_i) \leq \frac{4c}{k}$ and $C(B_i) \leq
\frac{4c}{k}$.

We will now present a protocol $\cP'$ for $f$.  In $\cP'$, $\alice$ on
input $x$, sends state $A_i(x)$ to $\referee$. Similarly $\bob$ on
input $y$, sends state $B_i(y)$ to $\referee$. $\referee$ performs the same actions
as in $\cP$ and outputs the result corresponding to the $i$-th coordinate. It can be seen that the error in $\cP'$,
on any input pair $(x,y)$ is bounded by $\epsilon$.

Now we present the final protocol $\cP''$. Let $\tau_a$ be the state obtained from Fact~\ref{fact:SleqC}
such that $\forall x \in \cX, \quad S(A_i(x)|| \tau_a ) \leq C(A_i)$.   Similarly let $\tau_b$ be the state
obtained from Fact~\ref{fact:SleqC} such that $\forall y \in \cY, \quad S(B_i(y)|| \tau_b ) \leq C(B_i)$.
Let $\ket{\phi_a}$ be a purification of $\tau_a$ and let $\ket{\phi_b}$ be a purification of $\tau_b$.  $\alice$ and
$\referee$ share several copies of $\ket{\phi_a}$ as shared
entanglement in $\cP''$. $\bob$ and $\referee$ share several
copies of $\ket{\phi_b}$ as shared entanglement in $\cP''$. $\alice$,
on receiving input $x$, using Fact~\ref{fact:compress} sends
$O(\frac{1}{\delta^3} \cdot S(A_i(x) \| \tau_a))$ bits to $\referee$
at the end of which $\referee$ has a state $\rho_x'$ such that
$\tn{A_i(x) - \rho_x'} \leq \delta$. Similarly $\bob$, on receiving
input $y$, using Fact~\ref{fact:compress} sends $O(\frac{1}{\delta^3}
\cdot S(B_i(y) \| \tau_b))$ bits to $\referee$ at the end of which
$\referee$ has a state $\sigma_x'$ such that $\tn{A_i(x) - \sigma_x'}
\leq \delta$. It can be seen that the error of protocol $\cP''$ on any
input pair $(x,y)$ is bounded by $\epsilon + 2\delta$. Also the
communication for any input pair is bounded by $\frac{4c}{k
\delta^3}$. Hence we can conclude part~\ref{it:quant} from the definitions
of $\sQ^{\|, \tpriv}_{\epsilon} (f^{\otimes k})$ and $\sQ^{\|,
\tpriv}_{\epsilon + \delta} (f)$.
\end{proof}

\section{Comparing simultaneous messages and one-way communication}
\label{sec:comparing}

Recall that $\sD^1(f)$ denotes the maximum of the deterministic one-way communication complexities over $\alice$ and $\bob$ sending the message. It is easy to see that $\sD^1(f)=\Theta(\sD^{\|}(f))$ for all total functions $f$. Bar-Yossef et al. \cite{BarYossefJKS02} describe a total function $g$ for which $\sR^1(g)=O(\log n)$, while $\sR^\|(g)=\Omega(\sqrt n)$. This function is a variant of the Generalized Addressing Function investigated in \cite{BabaiGKL03}.

 For $g$ $\alice$ receives inputs $x\in\{0,1\}^n$ and $i\in\{1,\ldots,n\}$, $\bob$ $y\in\{0,1\}^n$ and $j\in\{1,\ldots,n\}$, and $g(x,i,y,j)=1\iff x=y \mbox{ and } x_{i\oplus j}=1$. The upper bound on $\sR^1(g)$ is straightforward and based on fingerprinting. For the lower bound one can restrict the inputs to $x=y$, and arrive at an equivalent of the 3-party number on the forehead Generalized Addressing Function from \cite{BabaiGKL03} over $Z_2^n$, for which the corresponding lower bound is $\Omega(\sqrt n)$. In fact this lower bound can be shown for the easier problem $h$ defined like $g$, except that $i$ and $j$ are strings of length $\log (n)/2$, and we are interested in the bit $x_k$ for which $k$ is the concatenation of $i$ and $j$. While for $h$ the resulting lower bound is obviously tight, the exact complexity of the Generalized Addressing Function remains open \cite{AmbainisL00}.

We will describe a partial function for which $\sD^1(f)\leq\log n$, while the quantum  $\smp$-complexity with entanglement is still $\Omega(\sqrt n)$. But first let us generalize the result of
\cite{BarYossefJKS02} to the quantum case. The lower bound builds on and simplifies the information theoretic part of the proof in \cite{BabaiGKL03}. In fact we simply reduce the problem to random access coding.

\begin{theorem}
$\sQ^{\|,\pub}(h)=\Omega(\sqrt n)$, while $\sR^1(h)=O(\log n)$.
\end{theorem}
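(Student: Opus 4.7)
The plan is to prove both directions of the statement separately. For the one-way upper bound $\sR^1(h)\le O(\log n)$ I would use standard fingerprinting: $\alice$ sends $\bob$ the pair $(i,\sigma(x))$, where $\sigma(x)\in\{0,1\}^{O(\log n)}$ is a short fingerprint of $x$ (say, via random polynomial evaluation); $\bob$ checks whether $\sigma(y)=\sigma(x)$ and, if so, outputs $y_{i\|j}$, otherwise $0$. This uses $O(\log n)$ bits with constant two-sided error on all inputs, and by the symmetry of $h$ the same bound holds for the $\bob$-to-$\alice$ direction.

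The main work is the lower bound $\sQ^{\|,\pub}(h)=\Omega(\sqrt n)$, which I plan to obtain by reducing any protocol for $h$ to a quantum random access code and then applying Fact~\ref{fact:qrac}. Suppose $\Pi$ is a $\sQ^{\|,\pub}$ protocol for $h$ of total communication $c$ with error $\le 1/3$ on every input, using $\alice$-$\referee$ entanglement $\ket{\phi_A}$, $\bob$-$\referee$ entanglement $\ket{\phi_B}$, and a public coin between $\alice$ and $\bob$. I restrict to inputs of the form $(z,i,z,j)$, on which $h$ equals the single bit $z_{i\|j}$ and $i,j$ each range over only $\sqrt n$ values.

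From $\Pi$ I construct a random access code for the $n$-bit string $z$ indexed by $k=i\|j$ as follows. The encoder holds $z$ and shares with the decoder $\sqrt n$ independent copies of $\ket{\phi_A}$ and $\sqrt n$ independent copies of $\ket{\phi_B}$; unbounded pre-shared entanglement is permitted by Fact~\ref{fact:qrac}. For every $i\in\{0,1\}^{(\log n)/2}$ the encoder runs $\alice$'s procedure on input $(z,i)$ using the $i$-th copy of $\ket{\phi_A}$ (and fresh local randomness in place of the $\alice$-$\bob$ public coin) to prepare the message state $m_A(z,i)$, and analogously prepares $m_B(z,j)$ for every $j$ from the copies of $\ket{\phi_B}$. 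She transmits all $2\sqrt n$ messages, using at most $\sqrt n\cdot c$ qubits in total. On receiving $k=i\|j$ the decoder picks the $i$-th $\alice$-message and the $j$-th $\bob$-message together with the matching Referee-side halves of entanglement and runs $\referee$'s decoding; by correctness of $\Pi$ on input $(z,i,z,j)$ this outputs $z_k$ with probability at least $2/3$ averaged over uniformly random $(z,k)$.

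Fact~\ref{fact:qrac} then forces any such entanglement-assisted random access code to use $\Omega(n)$ qubits of encoding, so $\sqrt n\cdot c=\Omega(n)$, giving $c=\Omega(\sqrt n)$ as required. The delicate step I expect to have to be careful about is the entanglement bookkeeping: each simulated invocation of $\alice$'s (resp.\ $\bob$'s) protocol must consume its own fresh copy of $\ket{\phi_A}$ (resp.\ $\ket{\phi_B}$), so that the matching Referee-side half remains correctly entangled with the message the decoder will later select. This is the place where the structure of $\sQ^{\|,\pub}$—in particular the independence of the $\alice$-$\referee$ and $\bob$-$\referee$ entanglements—gets used crucially, and it also explains why the same reduction breaks down in the $\sQ^{\|,\ent}$ model and motivates the separate partial function used in the following theorem.
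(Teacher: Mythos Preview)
Your proposal is correct and follows essentially the same route as the paper: restrict to inputs with $x=y=z$, bundle all $\sqrt n$ of $\alice$'s messages and all $\sqrt n$ of $\bob$'s messages (together with the $\referee$-side halves of fresh copies of the bipartite entanglements) into a single encoding of $z$, and invoke Fact~\ref{fact:qrac} to force the total length to be $\Omega(n)$. The upper bound via fingerprinting is likewise what the paper has in mind.

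One small point to tighten: your phrase ``fresh local randomness in place of the $\alice$--$\bob$ public coin'' is dangerous if read as sampling a new coin value for each $i$ (and each $j$). Correctness of $\Pi$ on input $(z,i,z,j)$ requires that $\alice$'s message for $i$ and $\bob$'s message for $j$ be produced with the \emph{same} public-coin value; if you resample per message, the pair the decoder selects need not be consistent. The fix is immediate---the encoder samples the $\alice$--$\bob$ coin once (privately, since she simulates both players) and reuses that single value for all $2\sqrt n$ message preparations---but the write-up should say so explicitly. With that clarification your reduction and the paper's are the same argument.
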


\begin{proof}
We restrict the inputs to the set where $x=y$. For clarity let us first present a lower bound on $\sQ^{\|,\priv}(h)$. The plan is to construct a short quantum random access code from the messages in the protocol. For fixed $x$ $\alice$ is left with $\sqrt n$ different inputs $i$, similarly $\bob$ has only $\sqrt n$ different inputs $j$. Let the messages of $\alice$ be denoted by $\sigma_i$ and the messages of $\bob$ by $\rho_j$. We claim that the collection of all these messages forms a random access code for $x$. By the correctness of the protocol $\referee$ has a measurement that, applied to $\sigma_i\otimes\rho_j$ produces $x_{ij}$ with high probability for all $i,j$, which is exactly what we require. Hence all the $2\sqrt n$ messages together must have an average length of $(1-H(\epsilon))n$ (over the choice of $x$) via Fact~\ref{fact:qrac} to achieve success probability $1-\epsilon$, and consequently at least one message of the  $\smp$-protocol must have length $\Omega(\sqrt n)$.

To establish the same bound in the case $\alice$ and $\referee$ as well as $\bob$ and $\referee$ share entanglement, and $\alice$ and $\bob$ a classical public coin, note that we can produce a one-way protocol with entanglement for the Index function in the same way as above by composing the different messages of $\alice$ and $\bob$ (with $\referee$ holding the additional entanglement).
\end{proof}

The same lower bound obviously extends to the Generalized Addressing function over $Z_2^n$. It is easy to see that the proof can be generalized to the Generalized Addressing function over other groups and to the multiparty setting along the lines of the arguments in \cite{BabaiGKL03}.

Now note that the above proof fails if we allow entanglement between $\alice$ and $\bob$, since the messages $\sigma_i$ and $\rho_j$ will in general be entangled and so we cannot simply collect all of them while preserving the pairwise entanglement. We still conjecture the lower bound to hold for the quantum case with entanglement, but have not yet been able to show this. Instead we will construct a partial function (on $n^2$ inputs) for which $\sD^1(f)\leq\log n$ while every quantum  $\smp$ protocol needs communication $\Omega(n)$, even if $\alice$, $\bob$, and Referee share arbitrary tripartite entanglement. Note that such a result does not hold for total functions.

In fact the separation we seek is easily established for the following {\em relation} $s$: Let $\alice$ be given $x\in\{0,1\}^n$ and $i\in\{1,\ldots, n\}$, while $\bob$ gets $y\in\{0,1\}^n$ and $j\in\{1,\ldots, n\}$. Solving the relation requires us to output either $x_j$ or $y_i$ (and to indicate which). Clearly, $\sD^1(s)\leq\log n$. On the other hand a lower bound for the quantum $\smp$-model can be argued along the following lines: For each input one of the two allowed outputs must be made with probability at least $(1-\epsilon)/2$ (assuming error $\epsilon$). Hence under the uniform distribution on all inputs we are able to compute either $x_j$ or $y_i$ with probability $1/2-\epsilon/2$.
If we, say, can compute $x_j$ under the uniform distribution then we may toss a coin in case the protocol produces the other output. This leads to a simultaneous message protocol that computes the Index function with probability almost $3/4-\epsilon/2$. Hence the communication must be $\Omega(n)$, even with quantum messages and arbitrary entanglement, see Fact~\ref{fact:qrac}.

We now describe a partial Boolean function with the same behavior.
\begin{definition}
Let $\alice$ receive inputs $x\in\{0,1\}^n$ and $i\in\{1,\ldots, n\}$, while $\bob$ receives $n$ inputs $y_1,\ldots, y_n\in\{0,1\}^n$, and $j\in\{1,\ldots, n\}$. The promise is that $y_i=x$ and the desired function value is $f(x,i,y,j)=x_j$.
\end{definition}

Note that this function is essentially the Index function, but with enough side-information to allow it being computable by one-way protocols in both directions. Furthermore, this side-information is obfuscated in such a way as to make it useless in the  $\smp$-model.

\begin{theorem}
$\sD^1(f)\leq\log n$, while $\sQ^{\|, \ent}(f)\geq\Omega(n)$.
\end{theorem}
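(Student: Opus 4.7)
\noindent The upper bound $\sD^1(f) \leq \log n$ is immediate: $\alice$ sends the $\lceil \log n \rceil$-bit index $i$ to $\bob$, who then outputs $(y_i)_j = x_j$ using the promise $y_i = x$.

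For the lower bound the plan is to reduce to the quantum random access code bound of Fact~\ref{fact:qrac}. Consider the input distribution $D$ in which $i$ and $y_1,\dots,y_n$ are drawn independently and uniformly from $[n]$ and $\{0,1\}^n$ respectively, $x := y_i$, and $j \in [n]$ is uniform; this distribution satisfies the promise. Under $D$, $x$ is marginally uniform on $\{0,1\}^n$, and $(y,j)$ is marginally uniform and independent of $(x,i)$. The crucial observation is that $I(x_j : y \mid j) = O(1/n)$ under $D$: conditioned on $(y,j)$, the bit $x_j = (y_I)_j$ with $I$ uniform in $[n]$ is Bernoulli with mean $p = \tfrac{1}{n}\sum_k (y_k)_j$, and since $np$ is Binomial$(n,1/2)$, a Taylor expansion of binary entropy around $1/2$ combined with $\var(p) = 1/(4n)$ gives $\E_y[H(p)] \geq 1 - O(1/n)$. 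Since $\bob$'s message $M_B$ is a quantum channel applied to $(y,j)$ and his share of the (input-independent) tripartite entanglement, the data-processing inequality yields $I(x_j : M_B \mid j) = O(1/n)$ as well.

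The next step is to convert this into a random access code for $x$. I would replace $\bob$'s message $M_B(y,j)$ by a surrogate $\widetilde{M}_B(j)$ obtained by sampling from the marginal state $\bar\rho_{M_B}(j) \defeq \E_y[M_B(y,j)]$ using additional entanglement shared between $\alice$ and $\referee$; since $\widetilde{M}_B(j)$ depends only on $j$, $\referee$ can prepare it locally, and this only strengthens the model. Granting a replacement lemma showing that this substitution perturbs the probability of $\referee$ outputting $x_j$ by only $o(1)$, the resulting one-way quantum protocol --- where $\alice$ transmits $M_A(x,i)$ and $\referee$ decodes $x_j$ using $j$ together with the locally prepared $\widetilde{M}_B(j)$ --- is a quantum random access code for the $n$-bit uniform string $x$ with entanglement, achieving error $\epsilon + o(1)$. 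Fact~\ref{fact:qrac} then forces $|M_A| \geq \Omega(n)$, and hence $\sQ^{\|, \ent}(f) \geq \Omega(n)$.

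The main obstacle is proving the replacement lemma in the tripartite entanglement setting. A naive trace-distance bound via Pinsker's inequality (Fact~\ref{fact:entropytrace}) and the alternate characterisation of mutual information (Equation~\ref{eq:altchar}) controls the perturbation only by $\sqrt{I(y : M_B \mid j)} \leq \sqrt{b}$, which is useless when $\bob$'s message is long. What is needed is a \emph{selective} bound in which the perturbation is controlled by $\sqrt{I(x_j : M_B \mid j)} = O(1/\sqrt n)$ instead of $\sqrt{I(y : M_B \mid j)}$. I would attempt to establish this by analysing $\referee$'s POVM element for the output $x_j$ as an operator on $M_A \otimes M_B \otimes R$ and, using the alternate characterisation applied to the reduced ensemble $\{P(x_j \mid j),\, \E_{y \mid x_j, j}[M_B(y,j)]\}$, arguing that only the $x_j$-sensitive component of $M_B$ influences the decoding; care is needed because conditioning on $M_A$ and on $\referee$'s share of the tripartite entangled state could in principle amplify the information about $x_j$ extractable from $M_B$, and ruling out that amplification is the crux of the argument.
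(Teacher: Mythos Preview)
Your upper bound only treats the direction $\alice\to\bob$; since $\sD^1$ is defined as the maximum over both directions you must also note that $\bob$ can send $j$ to $\alice$, who outputs $x_j$. This is trivial to fix.

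The lower bound, however, has a genuine gap: the ``selective replacement lemma'' you need is not merely unproven but is the wrong target. Replacing $M_B(y,j)$ by the marginal $\bar\rho_{M_B}(j)$ is a local substitution on $\bob$'s register; in the tripartite-entanglement model $M_B$ is entangled with $E_A,E_R$, and severing that entanglement can move the \emph{joint} state $\referee$ acts on by a large amount even when $I(x_j:M_B\mid j)$ is tiny. The amplification-by-conditioning obstacle you flag is exactly the point, and there is no general mechanism ruling it out at the level of a single output bit $x_j$.

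The paper avoids this by bounding the information $\bob$'s side carries about the \emph{whole} string $X$, with the entanglement registers kept inside the system. Assume $|M_B|=\delta n$. Since $Y=(Y_1,\ldots,Y_n)$ has independent blocks and, by Fact~\ref{fact:lowinfent}, $I(Y:JE_AE_RM_B)\le 2|M_B|$, the chain rule gives $\E_i\, I(Y_i:JE_AE_RM_B)\le 2\delta$. Fix a good $\tilde{i}$; because $X=Y_{\tilde{i}}$, Pinsker now yields $\E_x\tn{\sigma_{\tilde{i},x}-\sigma_{\tilde{i}}}\le 2\sqrt{\delta}$ for the \emph{joint} state $\sigma$ of $J,E_A,E_R,M_B$. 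Since $E_A,E_R$ sit inside the bounded system, no conditioning amplification can occur. The replacement is then done at the input-distribution level---drop the promise $y_{\tilde{i}}=x$ and draw $y$ uniformly and independently of $x$---so $\bob$ still runs his original channel (on private randomness $y$) and all entanglement is preserved. The resulting protocol computes Index with error $\epsilon+\sqrt{\delta}$, whence Fact~\ref{fact:qrac} forces $|M_A|\ge (1-H(\epsilon+\sqrt{\delta}))n/2$; thus one of the two messages has length $\Omega(n)$. The missing idea in your attempt is precisely this: split the information over the $n$ independent blocks $Y_1,\ldots,Y_n$ (not over the bits of $x$), include $E_A,E_R$ in the system, and replace at the input level rather than at the message level.
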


\begin{proof} For the upper bound note that there are deterministic  $\smp$-protocols, in which either $\alice$ or $\bob$ sends only $\log n$ bits, and the other player $n$ bits. These protocols can be easily simulated in the one-way model.

For the lower bound we show that if $\bob$ sends $\delta n$ qubits only and the error is $\epsilon$, then $\alice$ must send $(1-H(\epsilon+\sqrt\delta))n/2$ qubits. Hence for constant $\epsilon$, one of the messages has length $\Omega(n)$.

Assuming that $\bob$ sends $\delta n$ qubits only, we show that an $\smp$ protocol $\cP$ for $f$ (with worst case error $\epsilon$ on inputs satisfying the promise $y_i=x$) can be turned into an $\smp$ protocol $\cP'$ for the Index function. In protocol $\cP'$ $\alice$ gets input $x$, $\bob$ gets input $j$ (there are no inputs $i,y$) and they compute $x_j$ with slightly larger error than $\epsilon$ (averaged over the uniform distribution on $(x,j)$). To achieve this we choose $i$ in a suitable way and fix it in $\cP$. We then show that choosing $y$ uniformly and independent of $x$ (instead with the correlation $y_i=x$) can cause only small extra error in computing $x_j$ in $\cP$. Hence we get an $\smp$ protocol $\cP'$ for the Index function (with $y$ acting as private randomness of $\bob$). This implies the bound on $\alice$'s message length via Fact~\ref{fact:qrac}, since it is easy to convert an $\smp$ protocol to a one-way protocol. Details follow.

Let the registers $X, I$ hold $\alice$'s inputs, and the registers $Y,J$ hold $\bob$'s inputs. Denote by $E_A,E_B,E_R$ the registers which contain the initial entangled state for $\alice$, $\bob$, and the $\referee$. These registers may hold an arbitrary state independent of the input. Let  register $M_A$ contain $\alice$'s message and register $M_B$ contain $\bob$'s message. 

Let the distribution $\mu$ be such that $y$, $i$ and $j$ are chosen uniformly and independently from their respective domains, and $x = y_i$. Let us put distribution $\mu$ on $(X,I,Y,J)$. 
Now consider the situation when $\bob$ has created his message, but neither $\alice$ nor $\referee$ have done anything yet (this can be assumed since $\alice$ and $\bob$'s operations act on different qubits).
In this situation by Fact \ref{fact:lowinfent} we have $I(JE_AE_RM_B:Y)\leq 2|M_B|$ and hence 
\begin{equation} \label{eq:lowexpect} \E_{i \leftarrow I}[I(JE_AE_RM_B:Y_i)]\leq 2|M_B|/n=2\delta \enspace . \end{equation} 
This can be shown using Fact~\ref{fact:chaininf} since the collection $\{Y_i \; : \; i \in [n]\}$ is independent.

Denote by $\sigma_{i,x}$ the joint state of $J,E_A,E_R,M_B$ when $I=i$ and $X=x$ (and hence $Y_i=x$).
Setting $\sigma_i \defeq \E_{x \leftarrow X} [\sigma_{i,x}]$ we get from Eq.~\ref{eq:lowexpect} and Eq.~\ref{eq:altchar}: $\E_{i \leftarrow I}\E_{x \leftarrow X}[S(\sigma_{i,x}\|\sigma_i)]\leq2\delta$. Let $\ti \in [n]$ be such that $\E_{x \leftarrow X}[S(\sigma_{\ti,x}\|\sigma_{\ti})]\leq2\delta$. Fact \ref{fact:entropytrace} and concavity of the square root function now implies:
\begin{equation} \label{eq:lowtr} \E_{x \leftarrow X}\tn{\sigma_{\ti,x}-\sigma_{\ti}}\le2\sqrt{\delta}  \enspace .
\end{equation} 

Let the distribution $\mu_{\ti}$ be obtained from $\mu$ by fixing $i = \ti$. Let $\rho_{\ti}$ be the joint state of $J,E_A,E_R,M_B,X$, just after $\bob$ has created his message in the protocol $\cP$, when we start with distribution $\mu_{\ti}$ on $(X,I,Y,J)$. Let the distribution $\mu'_{\ti}$ be such that all of $x,y,j$ are chosen uniformly and independently (without any correlation between $y_{\ti}$ and $x$) and $i$ fixed to $\ti$. Let $\theta_{\ti}$ be the joint state of $J,E_A,E_R,M_B,X$, just after $\bob$ has created his message in $\cP$, when we start with distribution $\mu'_{\ti}$ on $(X,I,Y,J)$. 
Note that, using Eq.~\ref{eq:lowtr} we have,
\begin{equation} \label{eq:nice} \tn{\rho_{\ti}-\theta_{\ti}}=\E_{x \leftarrow X}\tn{\sigma_{\ti,x}-\sigma_{\ti}}\le2\sqrt{\delta} \enspace .\end{equation}

Note that the "relevant" registers for correctness of the protocol $\cP$ (after Bob's message is generated) are only $X,J,E_A,E_R,M_B$ (since the output needs to be $X_J$). When we start with distribution $\mu_{\ti}$ on $(X,I,Y,J)$, the protocol $\cP$ would be correct with probability $1-\epsilon$ (since all inputs with positive probability under $\mu_{\ti}$ satisfy the promise $y_i=x$), and changing the state of all "relevant" registers from $\rho_{\ti}$ to $\theta_{\ti}$ can introduce an average extra error of at most $\sqrt\delta$ in computing $X_J$ (due to Eq.~\ref{eq:nice})\footnote{This is a standard fact that follows due to monotonicity of trace distance under admissible quantum operations.}. 

Now consider the protocol $\cP'$ for the Index function in which on inputs $(x,j)$ to $\alice$ and $\bob$ respectively (with $x,j$ drawn uniformly and independently), $\alice$ fixes input $i$ in $\cP$ to $\ti$, $\bob$ generates a $y$ uniformly and independent of $(x,j)$ using private coins, and then $\alice$, $\bob$ and $\referee$ proceed with the rest of the protocol $\cP$. Note that in this case registers $(X,I,Y,J)$  have distribution $\mu'_{\ti}$ on them. Due to our earlier observation, distributional error of $\cP'$, under $\mu'_{\ti}$, is at most $\epsilon+\sqrt\delta$.  Now $\cP'$ can trivially be turned into a one-way quantum protocol $\cP''$ with entanglement between $\alice$ and $\bob$ (by letting $\bob$ do also the role of $\referee$), and $\alice$ sending the message of same length as in $\cP'$. By Fact \ref{fact:qrac}, $\cP''$ needs communication $(1-H(\epsilon+\sqrt\delta))n/2$, hence $\alice$'s message in $\cP'$ must be that long.
\end{proof}

\section{Conclusions and open problems}
\label{sec:conc}

We have shown a tight (up to an additive log factor) Direct sum result for the randomized $\smp$-complexity with private coins, and a tight Direct sum result for the $\sQ^{\|,\tpriv}$ model. While for some relations like one investigated in \cite{GavinskyKRW06} lower bounds known for  $\sQ^{\|,\priv}$ can be extended to the  $\sQ^{\|,\tpriv}$ model, the general relation between those models remains unknown, and is related to the general open question of how useful entanglement is in quantum communication. The main open problems here are, however, to show a Direct Sum result for the $\sQ^{\|,\ent}$ model, and for the  $\sQ^{\|,\priv}$ model, or disprove such statements.

Furthermore we have investigated the gap between the $\smp$ model and the one-way model. We have described an exponential gap between the fully entangled quantum $\smp$ model and the deterministic one-way model for a partial function, which is optimal in the sense that such a gap does not hold for total functions. However, most likely there is an exponential gap between the  $\sQ^{\|,\ent}$ model and randomized one-way complexity for the (total function variant) Generalized Addressing function, but we have only been able to lower bound the $\sQ^{\|,\pub}$ complexity of this problem. Finally, lower bounds for this function in any mode that exceed the $\sqrt n$ barrier, or improved upper bounds would be very interesting.

\bibliography{smp-direct-priv}

\appendix
\section{Proofs of Facts}
\label{sec:proofs}
\begin{proofof}{Fact~\ref{fact:lowinfent}}
We have the following Araki-Lieb~\cite{ArakiL70} inequality for any two systems $M_1,
M_2$: $|S(M_1) - S(M_2)| \leq S(M_1M_2)$. This implies:
$$ I(M_1:M_2) = S(M_1) + S(M_2) - S(M_1M_2) \leq \min\{2S(M_1), 2S(M_2)\} \enspace .$$
Now,
\begin{eqnarray*}
I(X:MN) & = & I(X:M) + I(XM:N) - I(M:N)  \\
& \leq &  I(XM:N) \quad \leq \quad 2S(N) \enspace .
\end{eqnarray*}
\end{proofof}

\vspace{0.2in}

\begin{proofof}{Fact~\ref{fact:supadd}}
We show the fact for $k=2$, which easily implies the same for larger $k$.
Let $\cX \defeq \cX_1$ and $\cY \defeq \cX_2$.
For each $x \in \cX$, let~$E^x : \cY \rightarrow \cS$ be an $E$-derived channel on $\cY$ given by $E^x(y) \defeq E(x,y)$.
For each $x \in \cX$, let ${\mu_x}$ be a
probability distribution on $\cY$ such that $I(E^x_{\mu_x}) = C(E^x)$. Now
let $E^\cX : \cX  \rightarrow \cS$ be an $E$-derived channel on $\cX$ given by $E^\cX(x) \defeq \E_{y \leftarrow \mu_x}[E(x,y)]$.
Let ${\mu_\cX}$ be a distribution on $\cX$ such that $I(E^\cX_{\mu_\cX}) = C(E^\cX)$. Let
$\mu$ be the distribution on $\cX \times \cY$ arising by sampling from
$\cX$ according to $\mu_\cX$, and conditioned on sampling $x$, sampling from $\cY$ according to
$\mu_x$.  Now the following {\em chain rule property} holds for  mutual information.
\begin{fact}
\label{fact:chaininf}
Let $X, Y, Z$ be a tripartite system where $X$ is a classical system. Let $P$ be the distribution of X. Then,
$$I(XY : Z) = I(X : Z) + \E_{x \leftarrow P} [I((Y : Z) \ | \ X =x)] \enspace .$$
\end{fact}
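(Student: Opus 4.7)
The plan is to prove Fact~\ref{fact:chaininf} in two clean steps: first establish the standard (quantum) chain rule $I(XY:Z) = I(X:Z) + I(Y:Z|X)$ as an identity of von Neumann entropies with no assumption on $X$, and then exploit the classicality of $X$ to rewrite the conditional mutual information as a literal expectation over the outcomes of $X$.

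For the first step, I would just unfold the definitions. Writing $I(Y:Z|X) \defeq S(Y|X) + S(Z|X) - S(YZ|X)$ with $S(A|B) \defeq S(AB) - S(B)$, the right-hand side becomes
\[ I(X:Z) + I(Y:Z|X) = \bigl(S(X)+S(Z)-S(XZ)\bigr) + \bigl(S(XY)+S(XZ) - S(X) - S(XYZ)\bigr), \]
and all the $S(X)$ and $S(XZ)$ terms cancel, leaving $S(XY)+S(Z)-S(XYZ) = I(XY:Z)$. This part is a routine calculation that works for arbitrary tripartite states.

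For the second step, I would use that when $X$ is classical the joint state decomposes as $\rho_{XYZ} = \sum_x P(x)\,\ketbra{x}\otimes \rho^x_{YZ}$, where $\rho^x_{YZ}$ is the joint state of $YZ$ conditioned on $X{=}x$. For any subsystem $A\in\{Y,Z,YZ\}$ in this block-diagonal form one has the well-known identity $S(A|X) = \E_{x\leftarrow P}[S(\rho^x_A)]$ (which follows from direct computation on the block-diagonal eigenvalues of $\rho_{XA}$, using $S(X)=H(P)$). Plugging this into the definition of $I(Y:Z|X)$ gives
\[ I(Y:Z|X) = \E_{x\leftarrow P}\bigl[S(\rho^x_Y) + S(\rho^x_Z) - S(\rho^x_{YZ})\bigr] = \E_{x\leftarrow P}\bigl[I(Y:Z)\bigm|X=x\bigr]. \]
Combining with the chain rule from the first step yields the statement.

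Neither step is really an obstacle; the only point that deserves care is the identity $S(A|X) = \E_{x\leftarrow P}[S(\rho^x_A)]$ for classical $X$, which is what makes the conditional mutual information actually equal to an expectation over classical outcomes rather than merely an abstract entropic quantity. Once that observation is in place the rest is bookkeeping, and the analogous classical statement is obtained by replacing von Neumann entropy with Shannon entropy throughout.
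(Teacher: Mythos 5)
Your argument is correct. For the record, the paper does not actually supply a proof of Fact~\ref{fact:chaininf}: it appears inside the proof of Fact~\ref{fact:supadd} and is invoked as a known chain-rule property of mutual information, so there is no ``paper's proof'' to compare against. Your two-step derivation---first the unconditional entropy identity $I(XY:Z) = I(X:Z) + I(Y:Z|X)$ by cancellation, then the observation that for a cq-state $\rho_{XYZ} = \sum_x P(x)\,\ketbra{x}\otimes\rho^x_{YZ}$ one has $S(A|X) = \E_{x\leftarrow P}[S(\rho^x_A)]$, which turns $I(Y:Z|X)$ into the claimed expectation---is the standard route, and you correctly isolate the classicality of $X$ as the sole place it is used. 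The classical specialization (von Neumann entropy replaced by Shannon entropy) goes through verbatim, as you note.
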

Now we have,
\begin{eqnarray*}
C(E) & \geq & I (E_\mu) \quad \mbox{(from definition of capacity)} \\
& = & I(E^\cX_{\mu_\cX})  + \E_{x \leftarrow \mu_X}[I(E^x_{\mu_x})] \quad \mbox{(from chain rule for mutual information)}\\
& = & C(E^\cX) + \E_{x \leftarrow \mu_X} [C(E^x)]  \\
& \geq & \min_{F_1 \in \cC_1} C(F_1) + \min_{F_2 \in \cC_2} C(F_2) \enspace .
\end{eqnarray*}
This finishes the proof.
\end{proofof}

\vspace{0.2in}

\begin{proofof}{Fact~\ref{fact:SleqC}}
We will need the following {\em joint convexity} property of relative entropy. For
quantum states $\rho_1, \rho_2, \sigma_1, \sigma_2$ and $p \in [0,1]$ we have:
$$ S(p\rho_1 + (1-p)\rho_2 \| p \sigma_1 + (1-p)\sigma_2 ) \quad \leq \quad p \cdot  S(\rho_1 \| \sigma_1) +
(1-p)\cdot  S(\rho_2 \| \sigma_2) \enspace .$$

We will require the following minimax theorem from game theory, which
is a consequence of the Kakutani fixed point theorem in real analysis.
\begin{fact}
\label{fact:minimax}
Let $A_1,A_2$ be non-empty, convex and compact subsets of $\R^n$  ($\R$ stands for the set of real numbers)
for some positive integer $n$. Let $u: A_1 \times A_2 \rightarrow \R$ be a
continuous function, such that
\begin{enumerate}
\item \label{it:first} $\forall a_2 \in A_2$, the set
$\{a_1 \in A_1: u(a_1,a_2) = \max_{a_1' \in A_1} u(a_1',a_2)\}$
is convex; and
\item  \label{it:second} $\forall a_1 \in A_1$, the set
$\{a_2 \in A_2:  u(a_1,a_2) = \min_{a_2' \in A_2} u(a_1,a_2')\}$
is convex.
\end{enumerate}
Then, there is an $(a_1^\ast, a_2^\ast) \in A_1 \times A_2$ such that
\begin{displaymath}
\max_{a_1\in A_1}\, \min_{a_2\in A_2} u(a_1,a_2)
= u(a_1^\ast, a_2^\ast) = \min_{a_2 \in A_2}\, \max_{a_1 \in A_1}
u(a_1,a_2).
\end{displaymath}
\end{fact}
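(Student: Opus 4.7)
The plan is to apply the Kakutani fixed point theorem, as the statement itself suggests. First I would introduce the two ``best-response'' set-valued maps $\phi_1 : A_2 \to 2^{A_1}$ and $\phi_2 : A_1 \to 2^{A_2}$ by
\[
\phi_1(a_2) \defeq \{a_1 \in A_1 : u(a_1,a_2) = \max_{a_1' \in A_1} u(a_1',a_2)\},
\]
\[
\phi_2(a_1) \defeq \{a_2 \in A_2 : u(a_1,a_2) = \min_{a_2' \in A_2} u(a_1,a_2')\},
\]
and combine them into $\Phi : A_1 \times A_2 \to 2^{A_1 \times A_2}$ given by $\Phi(a_1,a_2) \defeq \phi_1(a_2) \times \phi_2(a_1)$. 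A fixed point of $\Phi$ is exactly a saddle point of $u$, so the whole proof reduces to verifying the hypotheses of Kakutani's theorem for $\Phi$.

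Next I would check those hypotheses in turn. Non-emptiness of $\phi_1(a_2)$ and $\phi_2(a_1)$ follows from continuity of $u$ together with compactness of $A_1$ and $A_2$ (extrema are attained). Convexity of the values is exactly what assumptions~\ref{it:first} and~\ref{it:second} give. For upper hemi-continuity (equivalently, a closed graph), suppose $(a_1^{(n)}, a_2^{(n)}) \to (a_1,a_2)$ and $(b_1^{(n)}, b_2^{(n)}) \in \Phi(a_1^{(n)}, a_2^{(n)})$ with $(b_1^{(n)}, b_2^{(n)}) \to (b_1,b_2)$. The defining inequalities $u(b_1^{(n)}, a_2^{(n)}) \geq u(a_1', a_2^{(n)})$ for all $a_1' \in A_1$, and the analogous one for $\phi_2$, pass to the limit by continuity of $u$, so $(b_1,b_2) \in \Phi(a_1,a_2)$. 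Since $A_1 \times A_2 \subseteq \R^{2n}$ is non-empty, convex, and compact, Kakutani's theorem applies and delivers a fixed point $(a_1^\ast, a_2^\ast)$.

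From $a_1^\ast \in \phi_1(a_2^\ast)$ and $a_2^\ast \in \phi_2(a_1^\ast)$ we read off
\[
u(a_1^\ast, a_2^\ast) = \max_{a_1 \in A_1} u(a_1, a_2^\ast) \geq \min_{a_2 \in A_2} \max_{a_1 \in A_1} u(a_1, a_2),
\]
\[
u(a_1^\ast, a_2^\ast) = \min_{a_2 \in A_2} u(a_1^\ast, a_2) \leq \max_{a_1 \in A_1} \min_{a_2 \in A_2} u(a_1, a_2).
\]
Combining these with the elementary weak-duality inequality $\max_{a_1} \min_{a_2} u \leq \min_{a_2} \max_{a_1} u$, all three quantities must coincide and equal $u(a_1^\ast, a_2^\ast)$, which is the claim.

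The main obstacle is the upper hemi-continuity step; once that is in place the rest is a short inequality chase. I would deliberately avoid attempting a Sion-style argument, since Sion's minimax theorem requires quasi-concavity/convexity of $u$ in each coordinate, a hypothesis strictly stronger than the convexity-of-argmax assumptions given here, so Kakutani really is the natural tool for this precise statement.
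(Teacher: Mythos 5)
Your proof is correct and takes the same route as the paper: the paper does not spell out a proof but instead cites Propositions 20.3 and 22.2 of Osborne and Rubinstein, which are precisely the Kakutani-based existence of a Nash equilibrium for the best-response correspondence followed by the standard equivalence between equilibria and saddle points in zero-sum games. You have simply inlined that argument, and the closed-graph check and the weak-duality inequality chase are both carried out correctly.
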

\paragraph{Remark:}
The above statement follows by combining Proposition~20.3 (which shows
the existence of Nash equilibrium $a^\ast$ in strategic games) and
Proposition~22.2 (which connects Nash equilibrium and the min-max
theorem for games defined using a pay-off function such as $u$) of
Osborne and Rubinstein's~\cite[pages 19--22]{osborne:gametheory} book
on game theory.

Let $A_1 = A_2$ be the set of all distributions on the set $\cX$. Since
$\cX$ is finite, $A_1,A_2$ are convex and compact subsets of $\R^n$ for
some $n$.
Let $\rho_x \defeq E(x)$. For distribution $\mu$ on $\cX$, let $\rho_\mu
\defeq \E_{x \leftarrow \mu} [\rho_x]$. Let the function $u: A_1
\times A_2 \mapsto \R$ be such that $u(\lambda,\mu) \defeq \E_{x
\leftarrow \lambda}[S(\rho_x \| \rho_{\mu})]$. The
condition~\ref{it:first} of Fact~\ref{fact:minimax} can be easily seen
to be satisfied since $u(\cdot , \cdot)$ is linear in the first argument. For condition~\ref{it:second} consider the
following. Fix $\lambda \in A_1$. Let $\mu_1, \mu_2 \in A_2$ be such
that $u(\lambda, \mu_1) = u (\lambda, \mu_2) = \min_{\mu'} u(\lambda,
\mu')$. Let $p \in [0,1]$; we need to show that $\mu_p \defeq
p\mu_1 + (1-p) \mu_2$ satisfies $u (\lambda, \mu_p) = \min_{\mu'} u(\lambda,
\mu')$. We have from joint
convexity of relative entropy:
\begin{eqnarray*}
\E_{x \leftarrow \lambda} [S(\rho_x \| \rho_{\mu_p})] & \leq & \E_{x \leftarrow \lambda}[p \cdot  S(\rho_x \| \rho_{\mu_1}) +
(1-p)\cdot  S(\rho_x \| \rho_{\mu_2})] \\
& = & p \cdot \E_{x \leftarrow \lambda} [S(\rho_x \| \rho_{\mu_1})] +
(1-p)\cdot  \E_{x \leftarrow \lambda} [S(\rho_x \| \rho_{\mu_2})] \\
& = & p \cdot u(\lambda, \mu_1) + (1-p) \cdot u(\lambda, \mu_2) =
\min_{\mu'} u (\lambda, \mu') \enspace .
\end{eqnarray*}

Therefore we have:
\begin{eqnarray*}
 \min_{\mu} \max_x S(\rho_x|| \rho_{\mu}) & = &  \min_{\mu}
\max_\lambda \E_{x \leftarrow \lambda} [S(\rho_x|| \rho_{\mu})] \\
& = & \min_{\mu} \max_\lambda u(\lambda, \mu) \\
& = & \max_{\lambda}\min_{\mu} u(\lambda, \mu) \quad \mbox{(from Fact~\ref{fact:minimax})}\\
& = & \max_{\lambda}\min_{\mu} \E_{x \leftarrow \lambda}[S(\rho_x || \rho_{\mu})] \\
& \leq & \max_{\lambda} \E_{x \leftarrow \lambda} [ S(\rho_x|| \rho_{\lambda})]  \\
& = & \max_{\lambda}  I(E_{\lambda}) = C(E)
\end{eqnarray*}
Therefore there exists $\tilde{\mu} \in A_2$ such that $\max_{x \in \cX} S(\rho_x|| \rho_{\tilde{\mu}}) \leq C(E)$. We let $\tau \defeq \rho_{\tilde{\mu}}$ and conclude our proof.
\end{proofof}

\vspace{0.2in}
\begin{proofof}{Fact~\ref{fact:compress}}
We use the following information-theoretic result called
the {\em substate theorem} due to Jain, Radhakrishnan, and
Sen~\cite{JainRS02}.
\begin{fact}[Substate theorem~\cite{JainRS02}]
\label{fact:substate}
Let $\cH, \cK$ be two finite dimensional Hilbert spaces and \newline
$\dim(\cK) \geq \dim(\cH)$. Let $\C^2$ denote the two dimensional
complex Hilbert space.  Let $\rho, \tau$ be density matrices in
$\cH$ such that $S(\rho \| \tau) < \infty$.  Let
$\ket{\overline{\rho}}$ be a
purification of $\rho$ in $\cH \otimes \cK$. Then, for $r > 1$, there
exist pure states $\ket{\psi}, \ket{\theta} \in \cH
\otimes \cK$ and $\ket{\overline{\tau}} \in \cH \otimes \cK \otimes \C^2$,
depending on $r$, such
that $\ket{\overline{\tau}}$ is a purification of $\tau$ and
$\tn{\ketbra{\overline{\rho}} - \ketbra{\psi}} \leq \frac{2}{\sqrt{r}}$, where
\begin{displaymath}
\ket{\overline{\tau}} \defeq
\sqrt{\frac{r-1}{r 2^{r k}}} \, \ket{\psi}\ket{1} +
\sqrt{1 - \frac{r-1}{r 2^{r k}}} \, \ket{\theta}\ket{0}
\end{displaymath}
and $k \defeq 8 S(\rho \| \tau) + 14$.
\end{fact}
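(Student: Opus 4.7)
The plan is to reduce the quantum substate theorem to a classical rejection-sampling lemma, lift the resulting statement to the operator level via a measurement-based argument, and then convert the convex decomposition into the claimed purification form.

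\textbf{Classical core.} I would first establish: if $P, Q$ are distributions on a finite set with $S(P\|Q) = s$, then $Q = \alpha P' + (1-\alpha) R$ for some distribution $R$, with $\alpha \geq (r-1)/(r \cdot 2^{rs})$ and $\|P - P'\|_1 \leq 2/\sqrt{r}$. The key object is the log-likelihood ratio $L(x) \defeq \log(P(x)/Q(x))$, which satisfies $\E_{x \leftarrow P}[L(x)] = s$. A Markov bound then gives the typical set $T = \{x : L(x) \leq rs\}$ $P$-measure at least $1 - 1/r$. On $T$ the pointwise inequality $P(x) \leq 2^{rs} Q(x)$ holds, so setting $P'(x) = P(x)/P(T)$ on $T$ and $\alpha = P(T)/2^{rs}$, the residual $Q - \alpha P'$ is a nonnegative measure that normalizes to $R$; also $\|P - P'\|_1 = 2(1 - P(T)) \leq 2/r \leq 2/\sqrt{r}$.

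\textbf{Quantum lift to operators.} The main obstacle is that $\rho$ and $\tau$ generally do not commute, so there is no common eigenbasis on which to run the classical argument. The plan is to sidestep this via a measurement-based reduction: choose an appropriate POVM $\{M_x\}$ on $\cH$ (for instance the pretty-good measurement associated to $\tau$) which induces classical distributions $P_\rho(x) = \Tr(M_x \rho)$ and $P_\tau(x) = \Tr(M_x \tau)$. Monotonicity of relative entropy under CPTP maps gives $S(P_\rho \| P_\tau) \leq S(\rho \| \tau)$. Applying the classical core and lifting the resulting decomposition back through the Naimark dilation of the POVM yields a convex decomposition $\tau = \alpha \rho' + (1-\alpha)\sigma$ at the operator level, with $\tn{\rho - \rho'} \leq O(1/\sqrt{r})$.

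\textbf{Purification form and principal obstacle.} Given the operator-level decomposition, the purification statement follows by purifying each summand: let $\ket{\psi} \in \cH \otimes \cK$ purify $\rho'$ and $\ket{\theta} \in \cH \otimes \cK$ purify $\sigma$ (possible since $\dim \cK \geq \dim \cH$), and set $\ket{\overline{\tau}} \defeq \sqrt{\alpha}\, \ket{\psi}\ket{1} + \sqrt{1-\alpha}\, \ket{\theta}\ket{0} \in \cH \otimes \cK \otimes \C^2$. This is a purification of $\tau$ by direct calculation, since the $\ket{1}\bra{0}$ and $\ket{0}\bra{1}$ cross terms vanish on tracing out $\C^2$ and the summands reduce to $\rho'$ and $\sigma$ on tracing out $\cK$. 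Uhlmann's theorem lets me choose $\ket{\psi}$ among all purifications of $\rho'$ to maximize the overlap $|\braket{\overline{\rho}}{\psi}|$, and the Fuchs--van de Graaf inequalities then convert the operator-level bound $\tn{\rho - \rho'} \leq O(1/\sqrt{r})$ into $\tn{\ketbra{\overline{\rho}} - \ketbra{\psi}} \leq 2/\sqrt{r}$. The hardest part will be the quantum lift: both the measurement reduction and the operator-to-purification conversion incur Fuchs--van de Graaf--type constant losses, and carefully tracking these is exactly what forces the specific exponent $k = 8 S(\rho \| \tau) + 14$ rather than the cleaner $k \approx S(\rho \| \tau)$ that the classical argument alone would suggest.
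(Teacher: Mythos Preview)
The paper does not prove this fact; it is cited from \cite{JainRS02} and invoked as a black box inside the proof of Fact~\ref{fact:compress}. There is no paper-side argument to compare against, so I comment only on the soundness of your sketch.

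Your classical core and the purification-from-decomposition step are correct and are the routine parts. The genuine gap is the ``quantum lift.'' You measure $\rho,\tau$ with a POVM $\{M_x\}$, obtain classical $P_\rho,P_\tau$, apply the classical lemma to get $P_\tau=\alpha P'+(1-\alpha)R$, and then propose to ``lift the decomposition back through the Naimark dilation.'' But Naimark's theorem only realizes a POVM as a projective measurement on an enlarged space; it does not invert the measurement map $\sigma\mapsto(\Tr M_x\sigma)_x$. A decomposition of the \emph{classical} output $P_\tau$ does not hand you quantum states $\rho',\sigma$ with $\tau=\alpha\rho'+(1-\alpha)\sigma$ and small $\tn{\rho-\rho'}$: the measurement discards the off-diagonal (noncommuting) part of $\rho$, and closeness of $P'$ to $P_\rho$ in total variation says nothing about trace-norm closeness of any quantum preimage of $P'$ to $\rho$ itself. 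This noncommutativity obstacle is exactly the content of the quantum substate theorem. The argument in \cite{JainRS02} handles it by working directly at the operator level --- essentially a spectral analogue of your log-likelihood cutoff, combined with a minimax over two-outcome POVMs --- rather than via a single fixed measurement reduction. Your outline does not supply that idea, so the plan as written stalls before the constant-tracking you anticipate at the end ever becomes relevant.
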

We will also require the following fact that is easily shown using {\em Schmidt decompositions} of pure states.
\begin{fact}[Local-transition]
Let $\rho$ be a quantum state in $\cK$. Let $\ket{\phi_1}$ and $
\ket{\phi_2}$ be two purification of $\rho$ in $\cH \otimes \cK$. Then
there is a local unitary transformation $U$ acting on $\cH$ such that
$(U \otimes I) \ket{\phi_1} = \ket{\phi_2}$.
\end{fact}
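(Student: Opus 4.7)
The plan is to follow the hint in the text and invoke the Schmidt decomposition. Since $\rho$ lives on $\cK$, the natural strategy is to fix an eigenbasis of $\rho$ on the $\cK$ side and expand both purifications in that basis; the requirement that each induces the same reduced state $\rho$ on $\cK$ then pins down the coefficient vectors on the $\cH$ side to be orthonormal up to a common set of weights, at which point finding the unitary $U$ on $\cH$ is immediate.

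Concretely, I would write the spectral decomposition $\rho = \sum_i \lambda_i \ket{v_i}\bra{v_i}$ with $\{\ket{v_i}\}$ an orthonormal basis of $\cK$. Any vector $\ket{\phi} \in \cH \otimes \cK$ admits a unique expansion $\ket{\phi} = \sum_i \ket{a_i} \otimes \ket{v_i}$ with $\ket{a_i} \in \cH$, and a direct computation yields $\Tr_\cH \ketbra{\phi} = \sum_{i,i'} \braket{a_{i'}}{a_i}\, \ket{v_i}\bra{v_{i'}}$. Imposing that this equals $\rho$ forces $\braket{a_{i'}}{a_i} = \lambda_i \delta_{i,i'}$, so $\ket{a_i} = \sqrt{\lambda_i}\,\ket{u_i}$ for some collection $\{\ket{u_i} : \lambda_i > 0\}$ that is orthonormal in $\cH$ (and $\ket{a_i} = 0$ whenever $\lambda_i = 0$).

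Applying this to each of $\ket{\phi_1}$ and $\ket{\phi_2}$ yields two orthonormal systems $\{\ket{u_i^{(1)}}\}$ and $\{\ket{u_i^{(2)}}\}$ in $\cH$, each indexed by $\{i : \lambda_i > 0\}$, with $\ket{\phi_j} = \sum_{i : \lambda_i > 0} \sqrt{\lambda_i}\,\ket{u_i^{(j)}} \otimes \ket{v_i}$ for $j = 1, 2$. I would then extend each system to an orthonormal basis of $\cH$ and let $U$ be the unitary on $\cH$ sending $\ket{u_i^{(1)}}$ to $\ket{u_i^{(2)}}$ for every $i$; linearity immediately gives $(U \otimes I)\,\ket{\phi_1} = \ket{\phi_2}$.

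There is no real obstacle here; the only subtlety worth flagging is the case of degenerate eigenvalues of $\rho$, where the individual Schmidt decompositions of $\ket{\phi_1}$ and $\ket{\phi_2}$ are not unique. The approach above sidesteps this by fixing a single eigenbasis $\{\ket{v_i}\}$ of $\rho$ once and for all, so that the remaining freedom is absorbed harmlessly into the choice of $\ket{u_i^{(j)}}$ within each degenerate eigenspace, and the argument goes through uniformly. Note also that the statement implicitly requires $\dim \cH \geq \mathrm{rank}(\rho)$, which is guaranteed by the very existence of a purification in $\cH \otimes \cK$.
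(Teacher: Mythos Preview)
Your proof is correct and follows exactly the route the paper indicates: the paper does not spell out a proof but simply notes that the fact ``is easily shown using Schmidt decompositions of pure states,'' which is precisely what you carry out. Your handling of the degenerate-eigenvalue subtlety by fixing a single eigenbasis of $\rho$ on $\cK$ is the standard way to make the argument clean.
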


Let $c \defeq S(\rho \| \tau)$. Let us invoke  Fact~\ref{fact:substate} with $\ket{\overline{\rho}}$
being any purification of $\rho$ and $r \defeq 16/\delta^2$. Let $\ket{\overline{\tau}}$ be the purification of
$\tau$ as given by Fact~\ref{fact:substate}. Let $\alice$ and $\referee$ start with $2^{\frac{2rk}{\delta}}$ ($k \defeq 8c + 14$)
copies of the pure state $\ket{\phi}$, such that marginal of $\ket{\phi}$ on $\referee$'s side is $\tau$.
Since the reduced quantum state on $\referee$'s part in both $\ket{\psi}$ and $\ket{\overline{\tau}}$ is the
same, from local-transition fact, there exists a transformation
acting only in $\alice$'s side which takes $\ket{\phi}$ to
$\ket{\overline{\tau}}$. $\alice$ transforms each $\ket{\psi}$ to
$\ket{\overline{\tau}}$ and measures the first bit. If she obtains 1 in any
copy of $\ket{\overline{\tau}}$ she communicates the number of that copy to
$\referee$. In case she fails to obtain $1$ in $2^{\frac{2rk}{\delta}}$ trials, she communicates this to $\referee$ and $\referee$ assumes the state $\ket{0}\bra{0}$.
It is easily seen that the communication from $\alice$ is
at most $O(\frac{c}{\delta^3})$. Also since $\Pr(\mbox{$\alice$ observes $1$}) = \frac{r-1}{r 2^{r
k}}$, and $\alice$ makes $2^{\frac{2 r k}{\delta}}$ tries she succeeds with probability at least $1 - \delta/2$. In case she succeeds, let the
state with $\referee$ in which $\alice$ succeeds be $\tilde{\rho}$.  From
Fact~\ref{fact:substate} and monotonicity of trace-norm, $\tn{\tilde{\rho} -  \rho} \leq \delta/2$.
So for the final state $\rho'$ produced with $\referee$, it follows that $\tn{\rho' - \rho} \leq \delta$.
\end{proofof}

\vspace{0.2in}
\begin{proofof}{Fact~\ref{fact:compressclass}}
This proof follows on very similar lines as that of Fact~\ref{fact:compress}.
We use the following classical substate theorem~\cite{JainRS02}.
\begin{fact}[Classical substate theorem]
\label{fact:substateclass}
Let $P, Q$ be probability distributions on the same set such that $S(P \| Q) < \infty$.  For every $r > 1$, there exist distributions $\tilde{P}, R$  such that
$\|P- P'\| \leq 2/r$ and $Q = \frac{r-1}{r 2^{r k}} \tilde{P} + (1 - \frac{r-1}{r 2^{r k}})R $, where $k \defeq  S(\rho \| \tau) + 1$.
\end{fact}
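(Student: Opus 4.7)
The plan is to prove the decomposition by identifying the ``typical'' part of the support of $P$ with respect to $Q$, pulling out a rescaled copy of $P$ restricted to this part from $Q$, and lumping the residual into $R$.

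More concretely, set $k \defeq S(P\|Q)+1$ and define the good set
\[
G \defeq \Bigl\{x : \tfrac{P(x)}{Q(x)} \leq 2^{rk}\Bigr\}.
\]
Then let $\tilde P$ be the restriction of $P$ to $G$, renormalized, i.e.\ $\tilde P(x) = P(x)/P(G)$ for $x\in G$ and $0$ otherwise. By construction $\tilde P(x) \leq 2^{rk} Q(x)/P(G)$ on all of $G$, so with $\alpha \defeq P(G)/2^{rk}$ the signed measure $Q - \alpha \tilde P$ is pointwise non-negative. Setting $R \defeq (Q-\alpha\tilde P)/(1-\alpha)$ (assuming $\alpha<1$, which follows since $k\geq 1$ and $r>1$) gives a probability distribution, and the decomposition $Q = \alpha \tilde P + (1-\alpha) R$ holds by design. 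A direct calculation yields
\[
\|P - \tilde P\|_1 \;=\; \sum_{x\in G} P(x)\bigl(\tfrac{1}{P(G)} - 1\bigr) + \sum_{x\notin G} P(x) \;=\; 2\,P(G^{c}),
\]
so both claims reduce to showing $P(G^{c}) \leq 1/r$.

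The main obstacle is exactly this Markov-type bound, because $X \defeq \log(P(x)/Q(x))$ can be negative, and a naive $P(X>rk) \leq \E_P[X]/(rk)$ is not directly valid. The fix is to control the negative part of $X$ separately: using $\log_2 y \leq (y-1)/\ln 2$ we get
\[
\sum_{x:\,P(x)<Q(x)} P(x)\log\tfrac{Q(x)}{P(x)} \;\leq\; \sum_{x:\,P(x)<Q(x)} \tfrac{Q(x)-P(x)}{\ln 2} \;\leq\; \tfrac{1}{\ln 2},
\]
so $\E_P[X\cdot\mathbb{1}_{X<0}] \geq -1/\ln 2$. Combined with $\E_P[X]=k-1$ this gives $\E_P[X\cdot\mathbb{1}_{X\geq 0}] \leq k - 1 + 1/\ln 2 \leq k$, and standard Markov on the non-negative random variable $X\cdot\mathbb{1}_{X\geq 0}$ yields $P(X > rk) \leq 1/r$ as required.

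Putting the pieces together, $P(G) \geq 1-1/r = (r-1)/r$, which forces $\alpha \geq (r-1)/(r\,2^{rk})$ and at the same time $\|P-\tilde P\|_1 \leq 2/r$. This matches the statement of the fact (with the natural identifications $k = S(P\|Q)+1$ in the role of the ``$k$'' written in terms of $\rho,\tau$ in the quoted version). The whole proof is parallel in spirit to the quantum substate theorem but avoids purification: the classical ``typical set'' is literally $G$, and rescaling plays the role of the amplitude $\sqrt{(r-1)/(r2^{rk})}$ in the pure-state decomposition of $\ket{\overline\tau}$.
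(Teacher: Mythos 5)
The paper does not prove this fact; it cites it from~\cite{JainRS02} and uses it as a black box (the appendix proves only the compression facts built on top of it), so there is no in-paper argument to compare against. Your strategy is the standard one for the classical substate theorem and is structurally sound: take $G=\{x:P(x)/Q(x)\le 2^{rk}\}$, let $\tilde P$ be $P$ renormalized on $G$, observe that $Q-\alpha\tilde P\ge 0$ pointwise with $\alpha=P(G)/2^{rk}$, and reduce both claims to the Markov-type bound $P(G^c)\le 1/r$. The computation $\|P-\tilde P\|_1=2P(G^c)$ is also correct.

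However, the Markov step has a genuine numerical gap. Writing $X=\log_2(P/Q)$ under $P$ and $X^+=\max(X,0)$, $X^-=\max(-X,0)$, you bound $\E_P[X^-]\le 1/\ln 2$ via $\log_2 y\le (y-1)/\ln 2$, and then assert $\E_P[X^+]=\E_P[X]+\E_P[X^-]\le (k-1)+1/\ln 2\le k$. But $1/\ln 2=\log_2 e\approx 1.443>1$, so the last inequality is false: your bound only gives $\E_P[X^+]\le k+0.443$, and Markov then yields $P(X>rk)\le (k+0.443)/(rk)$, which exceeds $1/r$. The fix is a sharper estimate of the negative part: $\E_P[X^-]=\sum_{x:P(x)<Q(x)} Q(x)\,\phi\!\left(P(x)/Q(x)\right)$ with $\phi(t)=-t\log_2 t$, and $\max_{t\in[0,1]}\phi(t)=\phi(1/e)=(\log_2 e)/e\approx 0.531<1$. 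This gives $\E_P[X^+]\le k-1+(\log_2 e)/e<k$, so Markov on $X^+$ does yield $P(G^c)\le \E_P[X^+]/(rk)<1/r$ as required. A smaller secondary point: you obtain $\alpha\ge (r-1)/(r2^{rk})$ rather than equality; to match the coefficient demanded by the statement, absorb the excess $(\alpha-(r-1)/(r2^{rk}))\tilde P$ into $R$ and renormalize, which preserves non-negativity of $R$.
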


We will also need the following easily verifiable fact.

\begin{fact}
\label{fact:class}
Let $X$ be a random variable distributed according to $Q$. Let $p \in [0,1]$ and $Q_1, Q_2$ be distributions such that $ Q = pQ_1 + (1-p)Q_2$. There exists a binary random variable $Z \in \{0,1\}$, correlated with $X$, with $\Pr[Z=1] = p$, such that the distribution of $X$ conditioned on $Z=1$ is $Q_1$ and  the distribution of $X$  conditioned on $Z=0$ is $Q_2$.
\end{fact}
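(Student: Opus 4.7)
The plan is to construct the joint distribution of $(X,Z)$ directly, by writing down a probability table and then verifying that each of the required properties (marginal of $X$, marginal of $Z$, and the two conditional distributions) drops out from the definition together with the mixture identity $Q = pQ_1 + (1-p)Q_2$.

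Concretely, I would define the joint law on the product of the state space of $X$ with $\{0,1\}$ by setting
\[ \Pr[X=x, Z=1] \defeq p \cdot Q_1(x), \qquad \Pr[X=x, Z=0] \defeq (1-p) \cdot Q_2(x). \]
These are non-negative and sum to $p + (1-p) = 1$, so this is a valid joint distribution. The marginal of $X$ is $p Q_1(x) + (1-p)Q_2(x) = Q(x)$ by hypothesis, so $X$ indeed has distribution $Q$. Summing the first line over $x$ gives $\Pr[Z=1]=p$ (so also $\Pr[Z=0]=1-p$). For the conditional distributions, whenever $p \in (0,1)$ one simply divides: $\Pr[X=x \mid Z=1] = p Q_1(x)/p = Q_1(x)$ and similarly $\Pr[X=x \mid Z=0] = Q_2(x)$.

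The only thing worth noting is the boundary cases $p=0$ or $p=1$, where one of the conditioning events has probability zero and the conditional distribution is formally undefined; in those cases the fact is vacuous on the empty branch, and one may (arbitrarily) declare the conditional equal to $Q_1$ or $Q_2$ respectively. There is no real obstacle here — the content of the fact is just that any convex decomposition of a distribution can be realized by conditioning on a latent binary variable, and the construction above is the canonical coupling that witnesses this.
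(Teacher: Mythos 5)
Your construction is correct and complete: defining the joint law by $\Pr[X=x,Z=1]=pQ_1(x)$ and $\Pr[X=x,Z=0]=(1-p)Q_2(x)$ immediately gives the right marginal for $X$ (by the mixture hypothesis), the right marginal for $Z$, and the two conditional laws, and your handling of the degenerate cases $p\in\{0,1\}$ is the right way to dispose of them. The paper itself offers no proof of this fact, merely labeling it ``easily verifiable,'' so there is nothing to compare against; the coupling you wrote down is the canonical one and is exactly what the authors must have had in mind.
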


Let $c \defeq S(P \| Q)$. Let us invoke Fact~\ref{fact:substateclass} with $r \defeq 4/\delta$ and let $\tilde{P}, R$ be as obtained by Fact~\ref{fact:substateclass}. Let $X$ be a random variable distributed according to $Q$.
Let $\alice$ and $\referee$ share $2^{\frac{2rk}{\delta}}$ ($k \defeq c + 1$) copies of $X$ as public randomness. Let $Z$ be a random variable, correlated with $X$, obtained from Fact~\ref{fact:class} by letting $Q_1 \defeq \tilde{P}$ and $Q_2  \defeq R$. $\alice$ generates the random variable $Z$ for each copy of $X$, measures $Z$ and sends the number of the first copy in which she succeeds to obtain $1$ to $\referee$.   In case she fails to obtain a $1$ in $2^{\frac{2rk}{\delta}}$ trials, she communicates this to $\referee$ and $\referee$ assumes single point distribution concentrated on $0$.
It is easily seen that the communication from $\alice$ is
at most $O(\frac{c}{\delta^2})$. Also since $\Pr(\mbox{$\alice$ observes $1$}) = \frac{r-1}{r 2^{r
k}}$, and $\alice$ makes $2^{\frac{2 r k}{\delta}}$ tries she succeeds with probability at least $1 - \delta/2$. In case she succeeds, the copy which she communicates to $\referee$ will be distributed according to $\tilde{P}$.
 From Fact~\ref{fact:substate}, $\|\tilde{P} -  P\| \leq \delta/2$.
So for the final distribution $P'$ produced with $\referee$, it follows that $\|P' - P\| \leq \delta$.
\end{proofof}

\end{document}